\font\msbm=msbm10 at 12pt
\newcommand{\ZZ}{\mbox{\msbm Z}}
\newcommand*\mycirc[1]{%
   \begin{tikzpicture}
     \node[draw,circle,inner sep=1pt,minimum size=1.6em]{#1};
   \end{tikzpicture}}
\def \R {{\mathbb R}}
\def \Z {{\ZZ}}
\def \x {{\bf x}}
\def \y {{\bf y}}
\def \a {{\bf a}}
\def \b {{\bf b}}
\def \u {{\bf u}}
\def \v {{\bf v}}
\def \x {{\bf x}}
\def \y {{\bf y}}
\def \z {{\bf z}}
\newtheorem{theorem}{Theorem}
\newtheorem{lemma}[theorem]{Lemma}
\newtheorem{remark}[theorem]{Remark}
\newtheorem{definition}[theorem]{Definition}
\newtheorem{encoding}[theorem]{Encoding}
\newtheorem{construction}[theorem]{Construction}
\newtheorem{proposition}[theorem]{Proposition}
\begin{document}
%
\title{On Conflict Free DNA Codes}
%
%
%


\author{Krishna~Gopal~Benerjee,
Sourav Deb, 
        and~Manish~K~Gupta,~\IEEEmembership{Senior Member,~IEEE}
\thanks{Krishna Gopal Benerjee, Sourav Deb, and Manish K Gupta are with Laboratory of Natural Information Processing, Dhirubhai Ambani Institute of Information and Communication Technology Gandhinagar, Gujarat, 382007, India, e-mail: (krishna\_gopal@daiict.ac.in, sourav\_deb@daiict.ac.in and mankg@computer.org).}
}

\maketitle

\begin{abstract}
DNA storage has emerged as an important area of research. The reliability of DNA storage system depends on designing the DNA strings (called DNA codes) that are sufficiently dissimilar. In this work, we introduce DNA codes that satisfy a special constraint. Each codeword of the DNA code has a specific property that any two consecutive sub-strings of the DNA codeword will not be the same (a generalization of homo-polymers constraint). This is in addition to the usual constraints such as Hamming, reverse, reverse-complement and $GC$-content. We believe that the new constraint will help further in reducing the errors during reading and writing data into the synthetic DNA strings. We also present a construction (based on a variant of stochastic local search algorithm) to calculate the size of the DNA codes with all the above constraints, which improves the lower bounds from the existing literature, for some specific cases. Moreover, a recursive isometric map between binary vectors and DNA strings is proposed. Using the map and the well known binary codes we obtain few classes of DNA codes with all the constraints including the property that the constructed DNA codewords are free from the hairpin like secondary structures. 
\end{abstract}

\begin{IEEEkeywords}
DNA Codes, Homo-polymers, Conflict free DNA strings, Hamming constraint, Reverse constraint, Reverse-complement constraint, $GC$ content constraint, Hairpin like secondary structures.
\end{IEEEkeywords}

%
\IEEEpeerreviewmaketitle

\ifCLASSOPTIONcompsoc
\IEEEraisesectionheading{\section{Introduction}\label{sec:introduction}}
\else
\section{Introduction}
\fi
\IEEEPARstart{T}{he} exponentially increasing demand in data storage forces to look into every possible option and DNA (DeoxyriboNucleic Acid) data storage has come out to be one of the most promising natural data storage for this purpose \cite{Jacobs2015}. 
After the first striking implementation of large-scale archival DNA-based storage architecture by Church \textit{et al.} \cite{Church1628} in 2012, followed by encoding scheme to DNA proposed by Goldman \textit{et al.} \cite{goldman2013towards} in 2013, researchers have taken great interests on the construction of DNA-based information storage systems\cite{DBLP:journals/corr/LimbachiyaG15,DBLP:journals/corr/LimbachiyaRG16} because of it's high storage density and longevity \cite{Church1628,goldman2013towards,yazdi2015rewritable}.
DNA consists of four types of bases or \textit{nucleotides} ($nt$) called adenine ($A$), cytosine ($C$), guanine ($G$) and thymine ($T$), where the Watson-Cricks complementary bases for $A$ and $C$ are $T$ and $G$ respectively and vice versa.
To store data into DNA, data need to be encoded into strings on quaternary alphabet $\{A,C,G,T\}$.  
The set of encoded DNA strings (also called DNA codewords) on the quaternary alphabet is called DNA code. 
For a DNA string, the complement is a DNA string obtained by replacing each nucleotide by it's complement.
Similarly, for a DNA string, the reverse DNA string is a DNA string in reverse order, and the complement of the reverse DNA string is called reverse-complement DNA string.
The encoded strings are synthesized using DNA synthesizer for the purpose of writing into DNA strings and the synthesized DNA strings has been stored in appropriate environment. 
To extract the source data, the stored DNA strings are read using DNA sequencing.  

During synthesis and sequencing the DNA strings, errors occur.
The errors can be reduced by choosing good encoding scheme for the DNA strings. 
Therefore, it is important to study the source of errors. 
Generally, 
insertion or deletion of repeated nucleotides occur frequently for DNA strings with consecutive repetitions of a specific nucleotide (e.g. $AC\textbf{GGGG}AT$) or of a block of nucleotides (e.g. $AG\textbf{ATATAT}GC$) up to certain length \cite{Loman,Thomson,Myers,7888471}. 
In addition, during DNA sequencing of a DNA string with consecutive identical block repetition(s), the DNA string gets misaligned more frequently \cite{Myers}. 
So, a DNA code is preferred in which each codeword does not have consecutive repetition(s) of a specific nucleotide or of a block of nucleotides. 
In this article, such DNA strings are called conflict free DNA strings. 
In literature, DNA codes without homopolymers (DNA string with consecutive repetition of a nucleotide) \cite{bornholt2016dna,blawat2016forward,2018arXiv181206798I,Erlich950,song2018codes} and without consecutive repeats of blocks \cite{7888471,8454743} are studied.
On the other hand, in this work, the considered conflict free DNA strings are not only free from homopolymers but also free from consecutive repetition of blocks of nucleotides. 

In a single stranded DNA, if there exists two sub-strands such that one is reverse-complement of another then the single stranded DNA folds back upon itself and forms antiparallel double stranded hairpin like structure (also called hairpin loop or Stem-loop) \cite{NelmsBrian,10.1007/11753681_12,1523340,2014arXiv1403.5477v1}. 
An example of such hairpin like structure is illustrated in Figure \ref{example fifure hairpin}.
For DNA sequencing, it is preferred to avoid such secondary structures \cite{10.1007/11753681_12}.
In this work, the conflict free DNA codes are constructed such that all the codewords are free from hairpin like structures with stem length more than $2$. 


		\tikzstyle{line} = [draw, -]
		\tikzstyle{A} = [draw, circle,fill=red!20, node distance=0.45cm, minimum height=0.4em]
		\tikzstyle{C} = [draw, circle,fill=blue!20, node distance=0.45cm, minimum height=0.4em]
		\tikzstyle{G} = [draw, circle,fill=green!20, node distance=0.45cm, minimum height=0.4em]
		\tikzstyle{T} = [draw, circle,fill=yellow!20, node distance=0.45cm, minimum height=0.4em]
		\tikzset{vertex/.style = {shape=circle,draw,minimum size=0em}}
		\tikzset{edge/.style = {->,> = latex'}}
		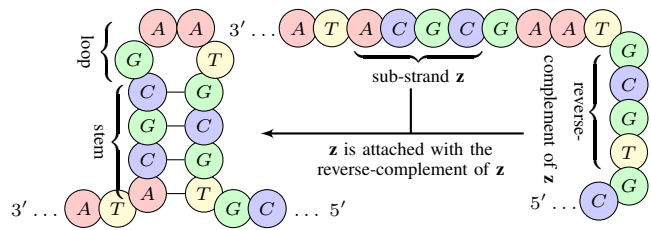
\begin{figure}
	\begin{tikzpicture}
			\scriptsize
			
			\node [A] at (-0.85,-0.2) (14) {$A$};
			\node [T] at (-0.4,-0.2) (13) {$T$};
			\node [A] (1) {$A$};
			\node [T] at (0.75,0) (2) {$T$};
		    \draw (1) -- (2);
			
			\node [C, above of=1] (3) {$C$};
			\node [G] at (0.75,0.45) (4) {$G$};
		    \draw (3) -- (4);
			
			\node [G, above of=3] (5) {$G$};
			\node [C] at (0.75,0.9) (6) {$C$};
		    \draw (5) -- (6);
			
			\node [C, above of=5] (7) {$C$};
			\node [G] at (0.75,1.35) (8) {$G$};
		    \draw (7) -- (8);
			
			\node [G] at (-0.18,1.78) (9) {$G$};
			\node [T] at (0.9,1.78) (10) {$T$};
			\node [A] at (0.14,2.2) (11) {$A$};
			\node [A] at (0.64,2.2) (12) {$A$};
			
			\node [G] at (1.14,-0.2) (15) {$G$};
			\node [C] at (1.59,-0.2) (16) {$C$};
		    
			\draw (2.3,-0.2) node {$\ldots\ 5'$};
			\draw (-1.5,-0.2) node {$3'\ldots$};
			
			\node [A] at (2,2.2) (17) {$A$};
			\node [T, right of=17] (18) {$T$};
			\node [A, right of=18] (19) {$A$};
			\node [C, right of=19] (20) {$C$};
			\node [G, right of=20] (21) {$G$};
			\node [C, right of=21] (22) {$C$};
			\node [G, right of=22] (23) {$G$};
			\node [A, right of=23] (24) {$A$};
			\node [A, right of=24] (25) {$A$};
			\node [T, right of=25] (26) {$T$};
			\node [G] at (6.4,1.9) (27) {$G$};
			\node [C, below of=27] (28) {$C$};
			\node [G, below of=28] (29) {$G$};
			\node [T, below of=29] (30) {$T$};
			\node [G, below of=30] (31) {$G$};
			\node [C] at (6,-0.1) (32) {$C$};
			\draw (1.4,2.2) node {$3'\ldots$};
			\draw (5.4,-0.1) node {$5'\ldots$};
			
			\draw (3.6,1.7) node {$\underbrace{\hspace{17mm}}_{\mbox{sub-strand \textbf{z}}}$};
			\node[label={[label distance=0.1cm,text depth=-1ex,rotate=270]{$\underbrace{\hspace{15mm}}_{\mbox{reverse-}}$}}] at (6,1) {};
			\node[label={[label distance=0.1cm,text depth=-1ex,rotate=270]{complement of \textbf{z}}}] at (5.2,0.9) {};
			\draw [edge, thick] (5,0.8) to (1.5,0.8);
			\draw [thick] (3.5,1.4) to (3.5,0.8);
			\draw (3.5,0.6) node {\textbf{z} is attached with the};
			\draw (3.5,0.3) node {reverse-complement of \textbf{z}};
			\node[label={[label distance=0.1cm,text depth=-1ex,rotate=270]{$\underbrace{\hspace{15mm}}_{\mbox{stem}}$}}] at (-0.35,0.6) {};
			\node[label={[label distance=0.1cm,text depth=-1ex,rotate=270]{$\underbrace{\hspace{8mm}}_{\mbox{loop}}$}}] at (-0.5,1.8) {};
	\end{tikzpicture}
	\caption{An example of hairpin like secondary structures in a single stranded DNA.}
	\label{example fifure hairpin}
\end{figure}

A DNA string can be read using specific hybridization between the DNA string and it's complement DNA string \cite{doi:10.1089/10665270152530818}. 
If DNA strings in a code are not different enough among themselves then nonspecific hybridization will occur and it will be a prominent cause of error. 
Therefore, a set of DNA codewords is preferred in which DNA strings are sufficiently different among themselves. 
From Metric theory, Hamming distance between two strings of same length over same alphabet is the number of positions in which the symbols in the strings are different. 
So, construction of DNA code with Hamming constraint (ensures the difference among DNA codewords), reverse constraint (ensures the difference between DNA codewords and their reverse DNA strings), and reverse-complement constraint (ensures the difference between DNA codewords and their reverse-complement DNA strings) is preferred.
In literature, DNA codes with reverse and reverse-complement constraints are constructed from finite fields and finite rings in \cite{8437313,936111,6620200,6710171}.

The thermal stability of a DNA string depends on $GC$ content (the total number of $G's$ and $C's$) in the DNA string \cite{Yakovchuk}. 
On the other hand, the high $GC$ content leads to the insertion and deletion error during polymer chain reaction (PCR). 
Therefore, such DNA codes are preferred in which each DNA codeword has the same $GC$ content and equal to almost half of it's length and the constraint for the DNA codes is called $GC$ Content constraint.
In \cite{7282568,6033808,6710171,Smith:2011:LNC:2645923.2646249}, DNA codes with balanced GC content are studied. 
DNA codes with reverse, reverse-complement and $GC$ content constraints are studied in \cite{Smith:2011:LNC:2645923.2646249,Gaborit2005LinearCF}. 
In \cite{4418465}, the lower bound on size of DNA codes with $GC$ content and reverse-complement constraints are revised. 
In fact, DNA codes with balanced GC content and without homopolymers are also studied in \cite{2018arXiv181206798I,Erlich950,song2018codes,8444440}.

A DNA code which meets multiple constraints at the same time is capable to reduce multiple type of errors efficiently during reading and writing into DNA strings. 
In literature, an algebraic solution for DNA codes with all the constraints is not studied yet. 
In this work, an algebraic structure for family of DNA codes is proposed where the constructed DNA codes meet all the constraints such as Hamming, reverse, reverse-complement, and $GC$ Content constraints.
Apart from that, all the DNA codewords do not have any consecutive identical sub-string(s) up to certain length.
In addition, these codewords are free from hairpin like secondary structures. 
In this paper, an algorithm is given which calculates the DNA code with the property that each DNA codeword does not have any consecutive repeated sub-string of any length. 
In addition, DNA codes with Hamming constraint, reverse constraint, reverse-complement constraint, and $GC$ Content constraint are obtained. 
For a DNA code with all the constraints, the obtained code size is improved for some specific parameters as given in \cite[Table I]{8424143}. 
Further, family of DNA codes have been obtained with Hamming, reverse, reverse-complement, and $GC$ Content constraints, where each DNA codeword is free from hairpin like secondary structure and repetition(s) of any consecutive identical sub-string(s) up to certain length.

In Section \ref{sec:preliminary}, preliminary for DNA codes are discussed.
Complete conflict free DNA codes with all the constraints are studied in Section \ref{sec:ccf DNA}. 
A recursive mapping from binary strings to DNA strings is discussed in Section \ref{sec:mapping}, which also is an isometry between a newly defined distance over binary strings and Hamming distance over DNA strings.
The conditions on binary strings are obtained, which ensure the constraints on encoded DNA strings in the same section. 
In Section \ref{sec:family}, a family of DNA codes are obtained from binary Reed-Muller codes.
\ref{sec:conclusion} concludes the work.

\section{preliminary} 
\label{sec:preliminary}

A code $\mathscr{C}$ 
$(n,M,d)$ over an alphabet $\Sigma$ of size $q$ is a set of $M$ distinct strings (also called codewords) each of length $n$ and the distance between any two distinct strings is atleast $d$. 
Codes over $\{0,1\}$ and $\Sigma_{DNA}$ = $\{A,C,G,T\}$ are called binary codes and DNA codes (denoted by $\mathscr{C}_{DNA}$) respectively. 
For various applications, codes with various distances (such as Gau distance \cite{8437313}) are studied in literature. 
In this work, DNA codes with Hamming distance and binary codes with a newly defined distance are studied. 
For any strings $\x$ and $\y$ in $\Sigma^n$, the Hamming distance $d_H(\x,\y)$ between the $\x$ and $\y$ is the total number of positions at which they differ. 
For a code $\mathscr{C}\subset\Sigma^n$, the minimum Hamming distance is $d_H$ = $\min\{d_H(\x,\y):\x\neq\y\mbox{ and }\x,\y\in\mathscr{C}\}$.
For a field or a ring defined on the alphabet $\Sigma$, if each codeword of a code is a linear combination of rows of some matrix over $\Sigma$ then the code is called a linear code and the matrix is called the generator matrix of the code.

For a DNA string $\x$ = $(x_1\ x_2\ldots x_n)\in\Sigma_{DNA}^n$, the reverse, complement and reverse-complement DNA strings of $\x$ are $\x^r$ = $(x_n\ x_{n-1}\ldots x_1)$, $\x^c$ = $(x_1^c\ x_2^c\ldots x_n^c)$, and $\x^{rc}$ = $(x_n^c\ x_{n-1}^c\ldots x_1^c)$ respectively. 
For any DNA code $\mathscr{C}_{DNA}$, if $\x,\y$ are DNA codewords then, the constraints on $\mathscr{C}_{DNA}$ are defined as follows \cite{doi:10.1089/10665270152530818}.
(i) Hamming constraint: the Hamming distance $d_H(\x,\y)\geq d_H$, where $\x\neq\y$.
(ii) reverse constraint: the Hamming distance $d_H(\x,\y^r)\geq d_H$, where $\x\neq\y^r$ but $\x$ may be equal to $\y$ and, $\y^r$ may not be a codeword in $\mathscr{C}_{DNA}$.
(iii) reverse-complement constraint: the Hamming distance $d_H(\x,\y^{rc})\geq d_H$, where $\x\neq\y^{rc}$ but $\x$ may be equal to $\y$, and $\y^{rc}$ may not be a codeword in $\mathscr{C}_{DNA}$.
(iv) $GC$ Content constraint: If the total number of $G$'s and $C$'s in each codeword is same and equal to $g$ then the code satisfies $g$-$GC$ content constraint. 
For a specific case $g = \lfloor n/2\rfloor$, the $\lfloor n/2\rfloor$-$GC$ content constraint is called simply $GC$ content constraint. 
Consider a DNA code $\mathscr{C}_{DNA}$ with minimum Hamming distance $d_H$.
For each $\x\in\mathscr{C}_{DNA}$, if ($i$) $\x^r\in\mathscr{C}_{DNA}$ then from the distance property of code, $d_H(\y,\x^r)\geq d_H$ for each $\y\in\mathscr{C}_{DNA}$, therefore, the code satisfies the reverse constraint. 
Similarly for each $\x\in\mathscr{C}_{DNA}$, if ($ii$) $\x^{rc}\in\mathscr{C}_{DNA}$ then again from the distance property of code, $d_H(\y,\x^{rc})\geq d_H$ for each $\y\in\mathscr{C}_{DNA}$, and hence the code satisfies the reverse-complement constraint. 
In consequence researchers are curious in construction of DNA codes which are closed under reverse and reverse-complement DNA strings \cite{8437313}.
Thus motivated, we construct set of DNA strings for given length such that those DNA strings satisfy multiple constraints. 
In the following lemma, the distinct DNA strings with multiple constraints are enumerated.
\begin{lemma}
For a given length $n$,
\begin{enumerate}
    \item there exists $4^{\lceil n/2\rceil}$ number of distinct DNA strings $\x\in\Sigma_{DNA}^{n}$ such that $\x$ = $\x^r$,  
    \item for even $n$, there exists $4^{n/2}$ number of DNA strings $\x\in\Sigma_{DNA}^n$ such that $\x$ = $\x^{rc}$,
    \item for a positive integer $m$ ($\leq n$), there exists $\binom{n}{m}2^n$ distinct DNA strings $\x\in\Sigma_{DNA}^{n}$ each with $GC$ content $m$, 
    \item for even positive integer $m$ ($\leq n$), there exists $\binom{n/2}{m/2}2^{n/2}$ distinct DNA strings $\x\in\Sigma_{DNA}^{n}$ each with $GC$ content $m$ and $\x$ = $\x^{rc}$, where $n$ is even, and
    \item for positive integer $m$ ($\leq n$), there exists $\eta$ distinct DNA strings $\x\in\Sigma_{DNA}^{n}$ each with $m$ - $GC$ content and $\x$ = $\x^r$, where
    \[
    \eta = \left\{
    \begin{array}{ll}
        0 & \mbox{if }n\mbox{ is even and }m\mbox{ is odd}, \\
        \binom{\lfloor n/2\rfloor}{\lfloor m/2\rfloor}2^{\lceil n/2\rceil} & \mbox{otherwise.}
    \end{array}\right.
    \] 
\end{enumerate}
\end{lemma}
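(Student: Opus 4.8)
The plan is to establish all five enumerations by direct combinatorial counting, built on two structural observations. \emph{Observation A}: the conditions $\x=\x^r$ and $\x=\x^{rc}$ are the coordinatewise identities $x_i=x_{n+1-i}$ and $x_i=x_{n+1-i}^c$, so a string of either type is completely determined by its first $\lceil n/2\rceil$ coordinates in the palindromic case, and by its first $n/2$ coordinates in the reverse-complement case \emph{provided $n$ is even}; if $n$ is odd, $\x=\x^{rc}$ would force the central coordinate to satisfy $x_{(n+1)/2}=x_{(n+1)/2}^c$, which is impossible in $\Sigma_{DNA}$, explaining the restriction to even $n$ in part~2. \emph{Observation B}: complementation fixes the bipartition of $\Sigma_{DNA}$ into $\{G,C\}$ and $\{A,T\}$, i.e.\ $y\in\{G,C\}\iff y^c\in\{G,C\}$.

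Parts 1--3 then follow immediately. For part~1 the first $\lceil n/2\rceil$ coordinates are arbitrary and the rest are forced, giving $4^{\lceil n/2\rceil}$; part~2 is the same count with $n/2$ free coordinates, giving $4^{n/2}$. For part~3, a string of $GC$ content $m$ is specified by choosing the $m$ positions carrying a $\{G,C\}$-symbol ($\binom{n}{m}$ ways), assigning one of $G,C$ to each of them ($2^{m}$ ways), and one of $A,T$ to each of the remaining $n-m$ positions ($2^{n-m}$ ways), for a total of $\binom{n}{m}2^{m}2^{n-m}=\binom{n}{m}2^{n}$.

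Parts 4 and 5 combine the two observations. By Observation~B, for a string with either symmetry a position $i$ and its mirror $n+1-i$ lie in the same class $\{G,C\}$ or $\{A,T\}$; hence the $GC$ content equals twice the number of $\{G,C\}$-coordinates among the free prefix, plus (only when $n$ is odd, which arises only in part~5) an extra $1$ if the central coordinate is in $\{G,C\}$. In part~4 ($n$ even, $\x=\x^{rc}$) the $GC$ content is automatically even, and for even $m$ we pick $m/2$ of the $n/2$ free coordinates to lie in $\{G,C\}$ and then make $2^{n/2}$ symbol choices ($2^{m/2}$ on the $\{G,C\}$-coordinates, $2^{n/2-m/2}$ on the $\{A,T\}$-coordinates), giving $\binom{n/2}{m/2}2^{n/2}$. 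In part~5 ($\x=\x^r$): if $n$ is even the argument gives $0$ for odd $m$ and $\binom{n/2}{m/2}2^{n/2}$ for even $m$; if $n$ is odd, the central coordinate is taken from $\{G,C\}$ when $m$ is odd and from $\{A,T\}$ when $m$ is even, so in either parity we choose $\lfloor m/2\rfloor$ of the $\lfloor n/2\rfloor$ non-central free coordinates to be $\{G,C\}$-type and make $2^{\lfloor n/2\rfloor}$ symbol choices there plus $2$ for the central coordinate, i.e.\ $\binom{\lfloor n/2\rfloor}{\lfloor m/2\rfloor}2^{\lceil n/2\rceil}$. Assembling the cases yields the piecewise formula for $\eta$.

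The only genuine bookkeeping is part~5 with $n$ odd: one must verify that the single central coordinate contributes exactly the factor of $2$ that upgrades $2^{\lfloor n/2\rfloor}$ to $2^{\lceil n/2\rceil}$, and that $\lfloor m/2\rfloor$ correctly records both parities of $m$ (the equalities $\lceil n/2\rceil=(n+1)/2$, $\lfloor n/2\rfloor=(n-1)/2$ for odd $n$, and $\lfloor m/2\rfloor=m/2$ or $(m-1)/2$ accordingly, make this routine). I would write parts 1--3 in a line or two each and present the parity case analysis for parts 4--5 together, since both rest on the ``mirror coordinates share a class'' consequence of Observation~B.
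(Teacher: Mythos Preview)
Your proposal is correct and follows essentially the same approach as the paper: identify the free coordinates via the symmetry condition, then count by first choosing which positions carry a $\{G,C\}$-symbol and then assigning symbols. The paper actually proves only parts 1--3 explicitly and dismisses parts 4--5 with ``the proof of remaining results are similar,'' so your detailed parity bookkeeping for parts 4--5 (in particular the role of the central coordinate when $n$ is odd) fills in what the paper leaves implicit.
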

\begin{proof}
Consider $\x$ = $(x_1\ x_2\ldots x_n)\in\Sigma_{DNA}^n$.
\begin{itemize}
    \item[1)] \ If $\x$ = $\x^r$ then, for $i=1,2,\ldots,\lceil n/2\rceil$, $x_i$ = $x_{n-i+1}$. 
Therefore, there exists $4^{\lceil n/2\rceil}$ number of distinct DNA strings $\x\in\Sigma_{DNA}^{n}$ such that $\x$ = $\x^r$.
    \item[2)] \ If $\x$ = $\x^{rc}$ then, for $i=1,2,\ldots,\lfloor n/2\rfloor$, $x_i$ = $x_{n-i+1}^c$. 
Therefore, there exists $4^{n/2}$ number of distinct DNA strings $\x\in\Sigma_{DNA}^{n}$ such that $\x$ = $\x^{rc}$.
Note that, for any positive odd $n$, any $\x\in\Sigma_{DNA}^n$ can not equal $\x^{rc}$.
    \item[3)] \ There are $\binom{n}{m}2^m$ ways to fill $m$ positions out of $n$ positions by a symbol from $\{C,G\}$.
If the remaining $n$ - $m$ positions are filled by a symbol from $\{A,T\}$ then there are $\binom{n}{m}2^n$ distinct DNA strings each with $GC$ content $m$.
\end{itemize}
The proof of remaining results are similar. 
\end{proof}
For theoretical analysis, we define complement constraint for a DNA code similar to reverse and reverse-complement constraints.
A DNA code $\mathscr{C}_{DNA}$ satisfies the complement constraint, if, for any $\x$ and $\y$ in $\mathscr{C}_{DNA}$, $d_H(\x,\y^c)\geq d_H$.
From the triangular property and definition of Hamming distance, one can observe the following Remark. 
\begin{remark}
A DNA code with reverse and complement constraints will also satisfy reverse-complement constraint.
\label{r rc constraint remark}
\end{remark}
Note that for any codeword $\x$ in $\mathscr{C}_{DNA}$, if $\x^r$, $\x^c\in\mathscr{C}_{DNA}$ then $\x^{rc}\in\mathscr{C}_{DNA}$.
Therefore, from the definition of a code with minimum Hamming distance $d_H$, for any codeword $\x$ in $\mathscr{C}_{DNA}$, if $\x^r$, $\x^c\in\mathscr{C}_{DNA}$ then the DNA code $\mathscr{C}_{DNA}$ is closed under Hamming, reverse and reverse-complement constraints with the minimum Hamming distance $d_H$. 

In a single stranded DNA, if there exist two sub-strands such that one sub-strand is the reverse-complement of another sub-strand then the DNA strand fold back and attach the both sub-strands to each other and forms hairpin like secondary structures with stems and loops of certain length. 
Short loops which are less than three bases long are highly unstable. 
The stem size of more than $2$ bases long reasonably approximates the hairpin like structures. 
The single stranded DNA $AT\textbf{ACGC}GAAT\textbf{GCGT}GC$, considered in Figure \ref{example fifure hairpin}, contains the reverse-complementary sub-strands $ACGC$ and $TGCG$ (see the bold sub-strands). 
The sub-strands are attached to each other and forms a stem of length $4$ base pairs, and a loop of size $4$ bases long. 
One can easily observe that, for given positive integers $n$ and $m$ ($\leq n/2$), consider DNA strings of length $n$ reduces as $m$ increases. such that each DNA string does not contain two sub-strings each of size more than $m-1$ and both the sub-strings are reverse-complement of each other. 
The total number of such DNA strings reduces as $m$ decreases. 
So, it is reasonable to consider $m=3$. 
Therefore, in this work, DNA strings are considered to be free from hairpin like structures with stem length of more than $2$ bases long.
\begin{definition}
A DNA string is called free from reverse-complement sub-strings (of length more than $2$) if the DNA string does not contain any two sub-strings of length more than $2$ such that one is the reverse-complement of the other.
Clearly the DNA string does not form a hairpin like secondary structure with stem length of more than $2$.
\label{kg hairpin definition}
\end{definition}
\begin{remark}
A DNA string is free from reverse-complement sub-strings if and only if the complement of the DNA string is also free from reverse-complement sub-strings.
The same result holds for reverse and reverse-complement. 
\label{kg hairpin remark}
\end{remark}
For example, the DNA string $ACATCG$ is free from reverse-complement sub-strings because the reverse-complement of $ACA$ is $TGT$ and $TGT$ is not a sub-string of $ACATCG$. 
The reverse and reverse-complement DNA strings $GCTACA$ and $CGATGT$ are also free from reverse-complement sub-strings. 

\section{On complete conflict free DNA strings}
\label{sec:ccf DNA}
In literature, errors are frequent for the existence of homopolymers and consecutive repetition of same sub-string of certain length in DNA during synthesize and sequencing the DNA \cite{Loman,Thomson,Myers,7888471}. 
Therefore, specific construction of DNA codes are preferred which excludes any DNA codeword that contains homopolymer or consecutive consecutive repetition of same sub-string of certain length.
For such DNA codes, we define the following.
\begin{definition}
For positive integers $n$, $\ell$ ($\leq n/2$) and $t$ $\leq \ell$, a DNA string is called $\ell$ conflict free, if the DNA string is free from consecutive repetition(s) of identical sub-string(s) of length $t$ for each $t=1,2,\ldots,\ell$. 
\label{conflict free def}
\end{definition}
For example the DNA string $ATCATCG$ is $2$ conflict free because any two consecutive sub-strings of same length ($\leq 2$) are not same, $i.e.$, the DNA string does not contain any of the DNA strings $ATAT$, $TCTC$, $CACA$, $CGCG$, $AA$, $TT$, $CC$ and $GG$. 
But, $ATC$ is repeating twice in the DNA string, so the DNA string is not $3$ conflict free.
Note that, $1$ conflict free DNA strings are also known as DNA strings free from homopolymers in literature \cite{bornholt2016dna,blawat2016forward,2018arXiv181206798I,Erlich950,song2018codes}.
Also note, for positive integer $\ell$ ($\geq 2$), an $\ell$ conflict free DNA string is also $\ell-1$ conflict free.
\begin{remark}
A DNA string is $\ell$ conflict free if and only if the complement of the DNA string is also $\ell$ conflict free.
Note that the result also holds for reverse and reverse-complement DNA strings.
\label{conflict free comp}
\end{remark}
\begin{table*}[t]
\caption{List of improved lower bound of $A_4^{cf,GC,r,rc}(n,d_H,\lfloor n/2\rfloor)$ (maximum size of complete complete conflict free DNA code with Hamming, reverse, reverse-complement and $GC$ Content constraints) from \cite[Table I]{8424143} and \cite[table II]{4418465}.}
\begin{center}
\begin{tabular}{|c|c|c|c|}
\hline
DNA code parameters & Lower bound of $A_4^{cf,GC,r,rc}(n,d_H,\lfloor n/2\rfloor)$ & Lower bound of $A_4^{cf,GC}(n,d_H,1)$ & Lower bound of $A_4^{GC,rc}(n,d_H)$ \\
 $(n,d_H)$ &  Table \ref{lb of CCF code size table} & Table I in \cite{8424143} & Table II in \cite{4418465} \\ \hline
 $(4,3)$   & $12$                                   & $11$                      & $11$   \\ \hline
 $(6,4)$   & $20$                                   & $16$                      & $16$   \\ \hline
 $(8,6)$   & $12$                                   & $9$                       & $12$   \\ \hline
 $(9,6)$   & $16$                                   & $15$                      & $20$   \\ \hline
 $(9,9)$   & $2$                                    & $0$                       & $1$   \\ \hline
 $(10,7)$  & $16$                                   & $7$                       & $16$   \\ \hline
 $(10,8)$  & $8$                                    & $5$                       & $8$   \\ \hline
\end{tabular}
\end{center}
\label{optimal list}
\end{table*}  
Now, for a positive integer $\ell$, the $\ell$ conflict free DNA code is defined as following.
\begin{definition}
For positive integers $n$ and $\ell$ $(\leq\lfloor n/2\rfloor)$, a DNA code with length $n$ is called $\ell$ conflict free if each codeword of the DNA code is $\ell$ conflict free.
\end{definition}
For example, the DNA code $\left\{ACTG, TGAC, CAGT, \right.$ $\left. GTCA\right\}$ is $2$ conflict free DNA code, because each DNA codeword is $2$ conflict free DNA string. 

Many computational approach to construct DNA codes with some additional constraints are studied in literature \cite{4418465,6710171,4019650,10.1007/3-540-36440-4_20,10.1007/11816102_32}. 
In this work, $\ell$ conflict free DNA codes are constructed using stochastic local search in a seed set of $\ell$ conflict free DNA strings such that each DNA string has a fix $GC$ content constraint.
The computational construction for $\ell$ conflict free DNA codes is given as follows. 
\begin{construction}
For given positive integers $n$, $\ell$ ($\leq \lfloor n/2\rfloor$) and $g$ ($\leq n$), let $\mathcal{S}\in\Sigma_{DNA}^n$ be the set of all $\ell$ conflict free DNA strings such that $GC$ content of each DNA string is $g$. 
For a sub-set $R$ of random cardinality and containing DNA strings which are randomly selected from $\mathcal{S}$, compute $\mathscr{C}_{DNA}$ = $R\cup\{\x^r,\x^c:\x\in\mathscr{C}\}$.
\label{lb of CCF code size cons}
\end{construction}

For example, consider $n$ = $3$, $\ell$ = $1$ and $g$ = $2$.
The seed set will be $\mathcal{S}$ = $\left\{ACG, AGC, CAC, CAG, CGA, CGT, CTC, \right.$ $\left. CTG, GAC, GAG, GCA, GCT, GTC, GTG, TCG, TGC \right\}$.
Note that, for $R$ = $\{CAC, CGT, ACG, TGC\}$, the $1$ conflict free DNA code with $GC$ content, reverse and reverse-complement constraints is $\mathscr{C}_{DNA}$ = $\left\{CAC, CGT, ACG, \right.$ $\left. TGC, GCA, GTG\right\}$ of code size $M$ = $6$ and minimum Hamming distance $d_H$ = $2$. 

\begin{table*}[t]
\centering
\caption{Lower bound of $A_4^{cf,GC,r,rc}(n,d_H,\ell)$ (maximum size of $\ell$ complete conflict free DNA code with Hamming, reverse, reverse-complement and $GC$ Content constraints). The values which are meeting existing values or improved values from \cite[Table I]{8424143} or \cite[table II]{4418465} are written in bold font.}
\begin{tabular}{|c|c||c|c|c|c|c|c|c|c|c|c|}
\hline
 & &\multicolumn{10}{c|}{$d_H$} \\ \cline{3-12}
 $n$    & $\ell$  & $1$           & $2$            & $3$           & $4$          & $5$           & $6$           & $7$          & $8$          & $9$  & $10$        \\ \hline\hline
 $2$    & $1$ & $8$     & $4$           & -              & -             & -            & -             & -             & -            & -            & -            \\ \hline
 $3$    & $1$ & $16$    & $6$           & $2$            & -             & -            & -             & -             & -            & -            & -            \\ \hline
 $4$    & $2$ & $48$    & $\textbf{32}$ & $\mycirc{\textbf{12}}$  & $\textbf{4}$  & -            & -             & -             & -            & -            & -            \\ \cline{2-12}
        & $1$ & $56$    & $\textbf{32}$ & $\mycirc{\textbf{12}}$ & $\textbf{4}$ & - & - & - & - & - & - \\ \hline
 $5$    & $2$ & $108$   & $48$          & $14$           & $4$           & $\textbf{2}$ & -             & -             & -            & -            & -            \\ \cline{2-12}
        & $1$ & $128$   & $52$          & $14$           & $4$           & $\textbf{2}$ & - & - & - & - & - \\ \hline
 $6$    & $3$ & $320$   & $88$          & $32$           & $\mycirc{\textbf{20}}$ & $0$          & $\textbf{4}$  & -             & -            & -            & -            \\ \cline{2-12}
        & $2$ & $320$   & $88$          & $32$           & $\mycirc{\textbf{20}}$ & $0$          & $\textbf{4}$  & - & - & - & - \\ \cline{2-12}
        & $1$ & $424$   & $100$         & $32$           & $\mycirc{\textbf{20}}$ & $0$          & $\textbf{4}$  & - & - & - & - \\  \hline
 $7$    & $3$ & $704$   & $136$         & $48$           & $28$          & $10$         & $\textbf{4}$  & $\textbf{2}$  & -            & -            & -            \\ \cline{2-12}
        & $2$ & $740$   & $142$         & $50$           & $28$          & $10$         & $\textbf{4}$           & $\textbf{2}$           & - & - & - \\ \cline{2-12}
        & $1$ & $1040$  & $146$         & $50$           & $28$          & $10$         & $\textbf{4}$ & $\textbf{2}$ & - & - & - \\  \hline
 $8$    & $4$ & $2024$  & $236$         & $76$           & $48$          & $20$         & $\mycirc{\textbf{12}}$ & $0$           & $\textbf{4}$ & -            & -            \\ \cline{2-12}
        & $3$ & $2064$  & $238$         & $86$           & $48$          & $20$ & $\mycirc{\textbf{12}}$ & $0$ & $\textbf{4}$ & - & - \\ \cline{2-12}
        & $2$ & $2192$  & $248$         & $88$           & $48$ & $20$ & $\mycirc{\textbf{12}}$ & $0$ & $\textbf{4}$ & - & -\\ \cline{2-12}
        & $1$ & $3352$  & $252$         & $92$           & $48$ & $20$ & $\mycirc{\textbf{12}}$ & $0$ & $\textbf{4}$ & - & - \\ \hline
 $9$    & $4$ & $4568$  & $336$         & $112$          & $60$          & $24$         & $\mycirc{\textbf{16}}$ & $6$           & $\textbf{4}$ & $\mycirc{\textbf{2}}$ & -            \\ \cline{2-12}
        & $3$ & $4660$  & $344$ & $116$ & $60$  & $24$ & $\mycirc{\textbf{16}}$ & $6$ & $\textbf{4}$ & $\mycirc{\textbf{2}}$ & - \\ \cline{2-12}
        & $2$ & $5144$  & $346$ & $116$ & $60$  & $24$ & $\mycirc{\textbf{16}}$ & $6$ & $\textbf{4}$ & $\mycirc{\textbf{2}}$ & - \\ \cline{2-12}
        & $1$ & $8576$  & $386$ & $120$ & $64$  & $28$ & $\mycirc{\textbf{16}}$ & $6$ & $\textbf{4}$ & $\mycirc{\textbf{2}}$ & - \\  \hline
 $10$   & $5$ & $13008$ & $564$ & $184$ & $92$  & $48$ & $28$ & $\mycirc{\textbf{16}}$ & $\mycirc{\textbf{8}}$ & $0$ & $\textbf{4}$ \\ \cline{2-12}
        & $4$ & $13008$ & $564$ & $184$ & $92$  & $48$ & $28$ & $\mycirc{\textbf{16}}$ & $\mycirc{\textbf{8}}$ & $0$ & $\textbf{4}$ \\ \cline{2-12}
        & $3$ & $13424$ & $568$ & $192$ & $92$  & $48$ & $28$ & $\mycirc{\textbf{16}}$ & $\mycirc{\textbf{8}}$ & $0$ & $\textbf{4}$ \\ \cline{2-12}
        & $2$ & $15104$ & $596$ & $196$ & $100$ & $48$ & $28$ & $\mycirc{\textbf{16}}$ & $\mycirc{\textbf{8}}$ & $0$ & $\textbf{4}$ \\ \cline{2-12}
        & $1$ & $27208$ & $660$ & $208$ & $104$ & $48$ & $28$ & $\mycirc{\textbf{16}}$ & $\mycirc{\textbf{8}}$ & $0$ & $\textbf{4}$ \\  \hline
\end{tabular}
\label{lb of CCF code size table}
\end{table*}  
For a given DNA string, the computational complexity to determine whether the DNA string is $\ell$ conflict free (using Definition \ref{conflict free def}) is more than the computational complexity to determine the $GC$ content of the DNA string.
Therefore, in order to construct the seed set for the Construction \ref{lb of CCF code size cons}, one can reduce the computations by removing DNA strings without $GC$ content $g$ first and than DNA strings which are not $\ell$ conflict free from the complete set $\Sigma_{DNA}^n$.

For given length $n$ and distance $d$, the maximum size of code is subject to interest among researches.
Now, similar to \cite{doi:10.1089/10665270152530818}, some notations for maximum size of $\ell$ conflict free DNA codes are introduced here. 
Let $A_4^{cf}(n,d_H,\ell)$ is the maximum size of a DNA code with length $n$ and minimum Hamming distance $d_H$ such that each DNA codeword is $\ell$ conflict free.
Similarly, for $\ell$ conflict free DNA codes, $A_4^{cf,r}(n,d_H,\ell)$, $A_4^{cf,rc}(n,d_H,\ell)$ and $A_4^{cf,GC}(n,d_H,\ell)$ denote the maximum size of $\ell$ conflict free DNA code with reverse,  reverse-complement and $GC$ content constraints respectively. 
The maximum size of the $\ell$ conflict free DNA code with $GC$ content, reverse and reverse-complement constraints is denoted as $A_4^{cf,GC,r,rc}(n,d_H,\ell)$.

The relation among the size of $\ell$ conflict free DNA codes with additional constraints are given in following theorem.
\begin{theorem} For a positive even integer $n$, \[A_4^{cf,GC,r}(n,d_H,\ell) = A_4^{cf,GC,rc}(n,d_H,\ell),\] and for a positive odd integer $n$, 
\begin{equation*}
    \begin{split}
        A_4^{cf,GC,r}(n,d_H+1,\ell) & \leq A_4^{cf,GC,rc}(n,d_H,\ell) \\ 
        & \leq A_4^{cf,GC,r}(n,d_H-1,\ell).
    \end{split}
\end{equation*}
\end{theorem}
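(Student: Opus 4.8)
The plan is to transfer an optimal code through an explicit involution of $\Sigma_{DNA}^n$ that trades the reverse constraint for the reverse-complement constraint while leaving the length, the minimum Hamming distance and the $GC$ content unchanged. Let $\psi\colon\Sigma_{DNA}^n\to\Sigma_{DNA}^n$ complement every coordinate lying strictly past the middle of the string and fix the rest, i.e. $\psi(\x)_i=x_i$ for $i\le\lceil n/2\rceil$ and $\psi(\x)_i=x_i^c$ for $i>\lceil n/2\rceil$. First I would record the easy properties: $\psi$ is an involution; it acts coordinate-wise through a bijection of $\Sigma_{DNA}$, so $d_H(\psi(\x),\psi(\y))=d_H(\x,\y)$ for all $\x,\y$; and since complementing a nucleotide keeps it inside $\{A,T\}$ or inside $\{C,G\}$, $\psi$ preserves $GC$ content, hence sends $GC$-balanced strings to $GC$-balanced strings.

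Next I would compare $\psi(\y^r)$ with $\psi(\y)^{rc}$ coordinate by coordinate. When $n$ is even the reversal $i\mapsto n+1-i$ interchanges the fixed block with the complemented block, and a direct check gives $\psi(\y^r)=\psi(\y)^{rc}$ and $\psi(\y)^r=\psi(\y^{rc})$ for every $\y$. Since $\psi$ is an isometry this yields $d_H(\psi(\x),\psi(\y)^{rc})=d_H(\x,\y^r)$ and $d_H(\psi(\x),\psi(\y)^r)=d_H(\x,\y^{rc})$, so $\psi$ sends a code meeting the Hamming, $GC$ and reverse constraints onto one meeting the Hamming, $GC$ and reverse-complement constraints, and (being its own inverse) conversely; hence, once $\psi$ is known to respect the $\ell$-conflict-free property, the two maximal sizes coincide. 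When $n$ is odd the middle coordinate $(n+1)/2$ is the unique fixed point of $i\mapsto n+1-i$ and is not complemented, so $\psi(\y^r)$ and $\psi(\y)^{rc}$ agree everywhere except there, where they are complementary; consequently $|d_H(\psi(\x),\psi(\y)^{rc})-d_H(\x,\y^r)|\le 1$. Feeding this one-symbol slack into the same transfer — a reverse-constrained code of distance $d_H+1$ producing a reverse-complement-constrained code of distance $d_H$, and a reverse-complement-constrained code of distance $d_H$ producing a reverse-constrained code of distance $d_H-1$ — gives the two displayed inequalities, after the routine check of the degenerate pairs $\x=\y^r$ and $\x=\y^{rc}$.

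The hard part, and the step I expect to be the genuine obstacle, is arguing that $\psi$ respects the $\ell$-conflict-free property. A tandem repeat $uu$ sitting wholly before the midpoint, or wholly after it, survives $\psi$ untouched because both copies of $u$ undergo the same operation; the trouble is a repeated block straddling the midpoint, whose two copies are transformed inhomogeneously, together with the fact that $\psi$ can manufacture a brand-new short tandem repeat out of a block pair of the form $(w,\,w^c)$ — for example $GATC\mapsto GAAG$ — so $\psi$ does not preserve conflict-freeness string-by-string. Pushing the transfer argument through therefore forces one either to replace $\psi$ by a subtler involution that is simultaneously an isometry, $GC$-content-preserving, reverse/reverse-complement-swapping and $\ell$-conflict-free-preserving, or to control the conflict-free bookkeeping by some other means. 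That interaction between the half-complementation map and short tandem repeats is where essentially all of the difficulty lies; the Hamming-distance, $GC$-content and reverse/reverse-complement computations above are mechanical in comparison.
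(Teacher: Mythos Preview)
Your plan is exactly the half-complementation transfer of Marathe--Condon--Corn (Theorem~4.1 in \cite{doi:10.1089/10665270152530818}), and that is precisely what the paper invokes: its entire proof is the sentence ``Similar to the proof of \cite[Theorem 4.1]{doi:10.1089/10665270152530818}, the proof follows the fact that, for even $n$ and $\z_1,\z_2\in\Sigma^n$, $d_H(\z_1,\z_2)=d_H(\z_1^c,\z_2^c)$ \dots'', together with two shortening-type inequalities for odd $n$. So at the level of strategy you and the paper agree; the isometry, $GC$-content and reverse/reverse-complement computations you wrote out are the content of the cited classical argument.

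The obstruction you isolate is genuine, and the paper does not address it. Your example $GATC\mapsto GAAG$ already shows that complementing the second half can manufacture a homopolymer, so $\psi$ is \emph{not} a bijection of the set of $\ell$-conflict-free strings, and the Marathe--Condon--Corn transfer does not carry over verbatim once the conflict-free constraint is imposed. The paper's one-line proof simply cites the classical argument; it never mentions conflict-freeness, offers no alternative involution, and the odd-$n$ shortening/lengthening inequalities it lists have the same defect (appending a symbol to go from length $n-1$ back to $n$ can create a forbidden repeat, and also disturbs the $GC$ content). In short, you have reconstructed the paper's intended argument and, in addition, put your finger on a step that the paper takes for granted but which fails for the obvious map. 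Your assessment that ``essentially all of the difficulty lies'' in making the transfer compatible with $\ell$-conflict-freeness is correct, and nothing in the paper's proof fills that gap.
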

\begin{proof}
Similar to the proof of \cite[Theorem 4.1]{doi:10.1089/10665270152530818}, the proof follows the fact that, for even $n$ and $\z_1$, $\z_2\in\Sigma^n$, $d_H(\z_1,\z_2)$ = $d_H(\z_1^c,\z_2^c)$ and, for odd $n$, $A_4^{cf,GC,r}(n,d_H+1,\ell) \leq A_4^{cf,GC,r}(n-1,d_H,\ell)$ and $A_4^{cf,GC,r}(n-1,d_H-1,\ell) \leq A_4^{cf,GC,r}(n-1,d_H,\ell)$.
\end{proof}
\begin{remark}
For positive integers $n$, $\ell$ $(<\lfloor n/2\rfloor)$ and $d_H$ ($\leq n$), 
\begin{equation*}
    A_4^{cf}(n,d_H,\ell) \geq A_4^{cf}(n,d_H,\ell+1). 
\end{equation*}
\end{remark}
\begin{remark} For positive integers $n$, $\ell$ $(<\lfloor n/2\rfloor)$ and $d_H$ ($\leq n$), 
\begin{equation*}
    \begin{split}
        A_4^{cf}(n,d_H,\ell) \geq A_4^{cf,GC}(n,d_H,\ell) & \geq A_4^{cf,GC,r}(n,d_H,\ell) \\
        & \geq A_4^{cf,GC,r,rc}(n,d_H,\ell).
    \end{split}
\end{equation*}
\end{remark}
\begin{remark} For positive integers $n$ and $d_H$ ($\leq n$), 
\begin{equation*}
\begin{split}
    & A_4^{cf,GC}(n,d_H,1) \geq A_4^{cf,GC}(n,d_H,\ell) \mbox{ and } \\
    & A_4^{GC,rc}(n,d_H) \geq A_4^{cf,GC,rc}(n,d_H,\ell),
\end{split}
\end{equation*}
where $A_4^{GC,rc}(n,d_H)$ is the  maximum  size  of a DNA  code  with $GC$ content and reverse-complement constraints.
\label{new lb remark}
\end{remark}

For  $n=1,2,\ldots, 10$ and $\ell=1,2,\ldots,\lfloor n/2\rfloor$, the lower bound of $A_4^{cf,GC,r,rc}(n,d_H,\ell)$ (maximum size of $\ell$ complete conflict free DNA code with Hamming, reverse, reverse-complement and $GC$ Content constraints) is calculated using the Construction \ref{lb of CCF code size cons}.

For $n=1,2,\ldots, 10$, $g$ = $\lfloor n/2\rfloor$ and $\ell=1,2,\ldots,\lfloor n/2\rfloor$, the maximum sizes for DNA codes with various parameters are listed in Table \ref{lb of CCF code size table}, where the DNA codes satisfy reverse, reverse-complement, $GC$ content constraints and each codeword of the DNA code is $\ell$ conflict free.
For given $n$ and $\ell$, first the seed set $\mathcal{S}$ is obtained such that each DNA string in the set contains $\lfloor n/2\rfloor$ $GC$ content. 
Further, for a random sub-set $R$ of the set $\mathcal{S}$, the $\ell$ conflict free DNA code $\mathscr{C}_{DNA}$ = $R\cup\{\x^r,\x^c:\x\in\mathscr{C}\}$ is obtained.  
From the seed set $\mathcal{S}$, the sub-set $R$ is generated $10^6$ times and, for each sub-set $R$, the size $|\mathscr{C}_{DNA}|$ ($\leq A_4^{cf,GC,r,rc}(n,d_H,\ell)$) is calculated. 
For given $n$ and $\ell$, the lower bound in the Table \ref{lb of CCF code size table} is the maximum size which is obtained among all the $\ell$ complete conflict free DNA codes. 

If a DNA string of length $n$ is $\lfloor n/2\rfloor$ conflict free then, from Definition \ref{conflict free def}, there does not exist two consecutive identical sub-strings of same length in the DNA string.
Therefore, the DNA string is free from consecutive repetition of DNA sub-string(s) of any length and such DNA string can be called complete conflict free. 
This motives the following definition.
\begin{definition}
A DNA code $\mathscr{C}(n,M,d)$ is called complete conflict free DNA code, if each codeword of the DNA code is complete conflict free, $i.e.$, $\lfloor n/2\rfloor$ conflict free.
\label{con. free code}
\end{definition}
For example, $\{ACGT,TGCA,ATCG,CGTA,GCAG\}$ is a complete conflict free DNA code $i.e.$, each codeword is $\lfloor 4/2\rfloor = 2$ conflict free.
The complete conflict free DNA code can be constructed using Construction \ref{lb of CCF code size cons} by taking $\ell$ = $\lfloor n/2\rfloor$.

Note that, for a positive integer $n$, if $\z\in S(n)$ then $\z^r,\z^c,\z^{rc}\in S(n)$, where $S(n)$ is the set of all complete conflict free DNA strings each of length $n$.
Also note, any DNA code $\mathscr{C}_{DNA}\subset S(n)$ is always be a complete conflict free DNA code. 

The lower bound of $A_4^{cf,GC,r,rc}(n,d_H,\lfloor n/2\rfloor)$ is enumerated in Table \ref{lb of CCF code size table} using Construction \ref{lb of CCF code size cons} for $n$ = $1,2,\ldots,10$ and Hamming distance $d_H$ = $1,2,\ldots,n$. 
Note that for any complete conflict free DNA code $\mathscr{C}_{DNA}(n,M,d_H)$, the code size $M\leq A_4^{cf,GC,r,rc}(n,d_H,\lfloor n/2\rfloor)$. 
 Also note that $A_4^{cf,GC}(n,1,\lfloor n/2\rfloor)$ = $A_4^{cf,GC,r,rc}(n,1,\lfloor n/2\rfloor)$ and $A_4^{cf}(n,1,\lfloor n/2\rfloor)$ = $A_4^{cf,r,rc}(n,1,\lfloor n/2\rfloor)$, for a positive integer $n$. 
Therefore, from the Remark \ref{conflict free comp} and Definition \ref{con. free code}, $A_4^{cf,GC,r,rc}(n,1,\lfloor n/2\rfloor)$ = $|S|$, where $|S|$ is enumerated at Step $4$ in Construction \ref{lb of CCF code size cons}.
Therefore, for $n$ = $1,2,\ldots,10$, the listed values $A_4^{cf,GC,r,rc}(n,1,\lfloor n/2\rfloor)$ are tight.
Similarly, from the definition of Hamming distance one can observe that, for any odd length DNA string $\x$, $d_H(\x,\x^r)< n$,  
\[
A_4^{cf,GC,r,rc}(n,n,\lfloor n/2\rfloor) = 
\left\{
\begin{array}{ll}
    2 & \mbox{if } n \mbox{ is odd}, \\
    4 & \mbox{if } n \mbox{ is even}.
\end{array}
\right.
\]
Hence, the values $A_4^{cf,GC,r,rc}(n,n,\lfloor n/2\rfloor)$ are also tight for $d_H$ = $n$ (= $1,2,3\ldots,10$) in the Table \ref{lb of CCF code size table}. 
In Table \ref{lb of CCF code size table}, values written in bold font indicate equal or improved values from \cite[Table I]{8424143} or \cite[Table II]{4418465}. 
For the equal and improved values, the corresponding code parameters have been considered and the respective DNA codes and their codewords are listed in Table \ref{codewords for improved codes}. 
Apart from the existing literature, all the specified constraints have been considered in Table \ref{lb of CCF code size table}. 
Some of those values are batter compared to the DNA code size with less constraints listed in Table \ref{optimal list}. 

Note that the values in Table I, \cite{8424143} are the lower bounds for maximum size of DNA codes satisfying $GC$ content constraint and free from homopolymers, and in Table II, \cite{4418465} are lower bounds for maximum size of DNA codes satisfying $GC$ content and reverse-complement constraints, on the other hand, values listed in Table \ref{lb of CCF code size table} in this paper are the lower bounds for maximum size of complete conflict free DNA codes satisfying Hamming, reverse, reverse-complement and $GC$ content constraints. 
From Remark \ref{new lb remark} and Table \ref{optimal list}, some new lower bounds are obtained for $A_4^{cf,GC}(n,d_H,1)$ and $A_4^{GC,rc}(n,d_H)$, where $A_4^{GC,rc}(n,d_H)$ is the maximum size of DNA code with codeword length $n$ and minimum distance $d_H$ which satisfies reverse-complement and $GC$ content constraints. 
For $A_4^{cf,GC}(n,d_H,1)$, the newly updated bounds are 
\[\begin{array}{ll}
    A_4^{cf,GC}(4,3,1)\geq 12, & A_4^{cf,GC}(6,4,1)\geq 20, \\
    A_4^{cf,GC}(8,6,1)\geq 12, & A_4^{cf,GC}(9,6,1)\geq 16, \\
    A_4^{cf,GC}(9,9,1)\geq 2, & A_4^{cf,GC}(10,7,1)\geq 16 \mbox{ and } \\
    A_4^{cf,GC}(10,8,1)\geq 8. & 
\end{array}\]
Similarly, for $A_4^{GC,rc}(n,d_H)$, the newly achieved bounds are 
\[
\begin{array}{ll}
    A_4^{GC,rc}(4,3)\geq 12, & A_4^{GC,rc}(6,4)\geq 20 \mbox{ and} \\ 
    A_4^{GC,rc}(9,9)\geq 2. & \\
     
\end{array}
\]

\begin{table}[t]
\caption{Codewords for complete conflict free DNA codes with Hamming, reverse, reverse-complement and $GC$ Content constraints. All the codes have code size improved from existing literature.}
\begin{center}\scriptsize
\begin{tabular}{|c|}
\hline
     Codewords for $(4,12,3)$ complete conflict free DNA code \\  \hline
     $ACTG$, $AGCT$, $ATGC$, $CAGT$, $CGTA$, $CTAG$, \\ 
     $GATC$, $GCAT$, $GTCA$, $TACG$, $TCGA$, $TGAC$ \\ \hline\hline
     Codewords for $(6,20,4)$ complete conflict free DNA code \\  \hline
     $ACAGTG$, $ACGTGA$, $AGCTAG$, $AGTGCA$, $ATCAGC$, \\
     $CACTGT$, $CATGTC$, $CGACTA$, $CTAGCT$, $CTGTAC$, \\
     $GACATG$, $GATCGA$, $GCTGAT$, $GTACAG$, $GTGACA$, \\
     $TAGTCG$, $TCACGT$, $TCGATC$, $TGCACT$, $TGTCAC$ \\ \hline\hline
     Codewords for $(8,12,6)$ complete conflict free DNA code \\  \hline
     $ACAGATCG$, $AGCTACTC$, $CATACGTC$, $CGATCTGT$, \\
     $CTCATCGA$, $CTGCATAC$, $GACGTATG$, $GAGTAGCT$, \\
     $GCTAGACA$, $GTATGCAG$, $TCGATGAG$, $TGTCTAGC$ \\ \hline\hline
     Codewords for $(9,16,6)$ complete conflict free DNA code \\  \hline
     $ACAGTAGCT$, $AGCTACTGT$, $AGTAGCATC$, $ATACAGACG$, \\ 
     $ATGATCGAG$, $CGTCTGTAT$, $CTACGATGA$, $CTCGATCAT$, \\ 
     $GAGCTAGTA$, $GATGCTACT$, $GCAGACATA$, $TACTAGCTC$, \\ 
     $TATGTCTGC$, $TCATCGTAG$, $TCGATGACA$, $TGTCATCGA$ \\ \hline\hline
     Codewords for $(9,2,9)$ complete conflict free DNA code \\  \hline
     $ACGATAGCA$, $TGCTATCGT$ \\ \hline\hline
     Codewords for $(10,16,7)$ complete conflict free DNA code \\  \hline
     $ACGTAGCAGA$, $ACTACAGACG$, $AGACGATGCA$, \\
     $AGCGACTATC$, $ATAGCTCGTG$, $CACGAGCTAT$, \\ 
     $CGTCTGTAGT$, $CTATCAGCGA$, $GATAGTCGCT$, \\
     $GCAGACATCA$, $GTGCTCGATA$, $TATCGAGCAC$, \\
     $TCGCTGATAG$, $TCTGCTACGT$, $TGATGTCTGC$, \\ 
     $TGCATCGTCT$ \\
     \hline\hline
     Codewords for $(10,8,8)$ complete conflict free DNA code \\  \hline
     $ACATGCGATC$, $CAGATACAGC$, $CGACATAGAC$,  \\ 
     $GATCGCATGT$, $GCTGTATCTG$, $GTCTATGTCG$,  \\ 
     $CTAGCGTACA$, $TGTACGCTAG$ \\ \hline
\end{tabular}
\end{center}
\label{codewords for improved codes}
\end{table}

\section{Mapping and their properties}
\label{sec:mapping}
In spite of the fact that the frequency of occurrence of slipped-strand mispairing errors in a complete conflict free DNA string is less then in a $\ell$ conflict free DNA string, the chances of occurrence of these errors is significantly low in a $\ell$ conflict free DNA string  for a sufficiently large $\ell$.
On the other hand, the computational complexity of Construction \ref{lb of CCF code size cons} is high.
Therefore, to sidestep the computational approach, a recursive mapping is defined algebraically in this section which ensures that the obtained DNA strings will be $\ell$ conflict free. 
Moreover, DNA codes satisfying all the constraints are also studied with respect to the mapping in this section.
\begin{definition}
For a positive integer $\ell$, consider $\x,\y\in\Sigma_{DNA}^\ell$ such that $\x\neq\y$. 
Define a map $f:\{\x,\x^c,\y,\y^c\}\times\{0,1\}\rightarrow\{\x,\x^c,\y,\y^c\}$ such that the Table \ref{Gen Map}($a$) holds.


\begin{table}[ht]\centering
	\caption{Mapping and an Example}
	\begin{tikzpicture}
		\draw (0,0) -- (4,0);
		\draw (0,0.1) -- (4,0.1);
		\node (a1) at (0.5,0.7) {\scriptsize{Binary}};
		\node (a2) at (0.5,0.3) {\scriptsize{Digit}};
		\draw (1,0.8) -- (1,-1.1);
		\draw (1.1,0.8) -- (1.1,-1.1);
		\node (a3) at (1.46,0.3) {$\x$};
		\draw (1.825,0.45) -- (1.825,-1.1);
		\node (a4) at (2.187,0.35) {$\x^c$};
		\draw (2.55,0.45) -- (2.55,-1.1);
		\node (a5) at (2.91,0.3) {$\y$};
		\draw (3.275,0.45) -- (3.275,-1.1);
		\node (a6) at (3.637,0.35) {$\y^c$};
		\node (a7) at (2.55,0.7) {\scriptsize{Previous  Nucleotide  Block}};
		
		\node (a8) at (0.5,-0.3) {$0$};
		\node (a9) at (1.46,-0.3) {$\y$};
		\node (a10) at (2.187,-0.25) {$\y^c$};
		\node (a11) at (2.91,-0.25) {$\x^c$};
		\node (a12) at (3.637,-0.3) {$\x$};
		\draw (0,-0.6) -- (4,-0.6);
		\node (a13) at (0.5,-0.9) {$1$};
		\node (a14) at (1.46,-0.85) {$\y^c$};
		\node (a15) at (2.187,-0.9) {$\y$};
		\node (a16) at (2.91,-0.9) {$\x$};
		\node (a17) at (3.637,-0.85) {$\x^c$};
		\node (a18) at (2,-1.4) {($a$) Mapping};

		\draw (4.5,0) -- (8.5,0);
		\draw (4.5,0.1) -- (8.5,0.1);
		\node (b1) at (5,0.7) {\scriptsize{Binary}};
		\node (b2) at (5,0.3) {\scriptsize{Digit}};
		\draw (5.5,0.8) -- (5.5,-1.1);
		\draw (5.6,0.8) -- (5.6,-1.1);
		\node (b3) at (5.96,0.3) {\scriptsize$CG$};
		\draw (6.325,0.45) -- (6.325,-1.1);
		\node (b4) at (6.687,0.3) {\scriptsize$GC$};
		\draw (7.05,0.45) -- (7.05,-1.1);
		\node (b5) at (7.41,0.3) {\scriptsize$AT$};
		\draw (7.775,0.45) -- (7.775,-1.1);
		\node (b6) at (8.137,0.3) {\scriptsize$TA$};
		\node (b7) at (7.05,0.7) {\scriptsize{Previous  Nucleotide  Block}};
		
		\node (b8) at (5,-0.3) {$0$};
		\node (b9) at (5.96,-0.3) {\scriptsize$AT$};
		\node (b10) at (6.687,-0.3) {\scriptsize$TA$};
		\node (b11) at (7.41,-0.3) {\scriptsize$GC$};
		\node (b12) at (8.137,-0.3) {\scriptsize$CG$};
		\draw (4.5,-0.6) -- (8.5,-0.6);
		\node (b13) at (5,-0.9) {$1$};
		\node (b14) at (5.96,-0.9) {\scriptsize$TA$};
		\node (b15) at (6.687,-0.9) {\scriptsize$AT$};
		\node (b16) at (7.41,-0.9) {\scriptsize$CG$};
		\node (b17) at (8.137,-0.9) {\scriptsize$GC$};
		\node (a18) at (6.5,-1.4) {(b) Example};
	\end{tikzpicture}
	\label{Gen Map}
\end{table}
\label{KG Map def}
\end{definition}

For example, one can obtain the mapping $f$ as given in Table \ref{Gen Map}(b) by considering $\x=CG$ and $\y=AT$ for $\ell$ = $2$.
One can read the table as $f(CG,0)$ = $AT$ and the rest follows.

\begin{encoding}
For positive integers $n$ and $\ell$, consider a mapping $f$ as defined in the Definition \ref{KG Map def}. 
A binary string $\a$ = $(a_1\ a_2 \ldots a_n)\in\{0,1\}^n$ is encoded into $\u$ = $(\u_1\ \u_2 \ldots \u_n)\in\{\x,\x^c,\y,\y^c\}^n$, in such a way that $\u_i=f(\u_{i-1},a_i)$ for each $i=2,3,\ldots,n$ and $\u_1=h(a_1)$, where $h:\{0,1\}\rightarrow\{\x,\x^c,\y,\y^c\}$ such that $h(0)=h(1)^c$.
Note that $\u_1\in\{\x,\x^c,\y,\y^c\}$ initiates the encoding of the binary string, and therefore the length of the encoded DNA string $\u$ is $n\ell$. 
Clearly, the encoding of two distinct binary strings are always distinct. 
Observe that for $i=2,3,\ldots,n$, $f(\u_{i-1},a_i)^c$ = $f(\u_{i-1}^c,a_i)$ = $f(\u_{i-1},\bar{a}_i)$, where $\bar{a}_i$ is binary complement of $a_i$.
\label{gen encod}
\end{encoding}

For positive integers $n$ and $\ell$ $(<n)$, consider a subset $S\subseteq\{0,1\}^n$. 
Each binary string from the set $S$ is encoded using the Non-Homopolymer map, and the set of all encoded DNA strings is denoted by $f(S)$. 
In particular, let $f(\{0,1\}^n)\subseteq\Sigma_{DNA}^{n\ell}$ be denoted by $\mathcal{B}$ the set of all possible DNA strings of length $n\ell$ and obtained by the encoding using Non-Homopolymer map.
The set $\mathcal{B}$ is given by Equation (\ref{equ b}).

\begin{figure*}
\begin{equation}
\mathcal{B} = \left\{
\begin{array}{ll}
    \{\u_1,\u_1^c\} & n=1, \\
    \{\u_1,\u_1^c\}\times(\{\u_2,\u_2^c\}\times\{\u_1,\u_1^c\})^{(n-1)/2} & n\mbox{ is odd integer and } n > 1, \\
    (\{\u_1,\u_1^c\}\times\{\u_2,\u_2^c\})^{n/2} & n\mbox{ is positive even integer}.
\end{array}
\right.
\label{equ b}
\end{equation}
\end{figure*}

For example, consider the mapping as given in Table \ref{Gen Map}(b) and $\u_1=\x$.
The binary string $(0\ 1\ 1)$ is encoded into the DNA string $(\x\ \y^c\ \x^c)$ = $CGTAGC$. 

The following Theorem 
ensures that the encoded DNA strings obtained from the mapping will be conflict free.
\begin{theorem}
For a positive integers $\ell$, if $\x,\y\in\Sigma_{DNA}^\ell$ such that each DNA string in the set $\{(\x\ \y^*\ \x^*\ \y^*), (\y\ \x^*\ \y^*\ \x^*): \x^*\in\{\x,\x^c\} \mbox{ and } \y^*\in\{\y,\y^c\}\}$ is $2\ell-1$ conflict free then any binary string will be encoded into a $2\ell-1$ conflict free DNA string using Encoding \ref{gen encod}.
\label{conflict free theorem}
\end{theorem}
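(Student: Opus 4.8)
The plan is to exploit the rigid block structure that Encoding \ref{gen encod} imposes. Write the encoded word as $\u=\u_1\u_2\cdots\u_n$, where each $\u_i\in\{\x,\x^c,\y,\y^c\}$ is a block of length $\ell$. First I would record two elementary features of the map $f$: (i) the ``type'' of $\u_i$ (whether $\u_i\in\{\x,\x^c\}$ or $\u_i\in\{\y,\y^c\}$) strictly alternates with $i$; and (ii) by the identity $f(\u_{i-1},a_i)^c=f(\u_{i-1}^c,a_i)=f(\u_{i-1},\bar a_i)$ noted in Encoding \ref{gen encod}, together with $h(0)=h(1)^c$, replacing the first block $\u_1$ by its complement complements every block of $\u$. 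By Remark \ref{conflict free comp} it therefore suffices to treat the case $\u_1\in\{\x,\y\}$, and more generally, when reading off any window of consecutive blocks we may assume its first block is uncomplemented.

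The next step is to show that \emph{every window of four consecutive blocks} $\u_i\u_{i+1}\u_{i+2}\u_{i+3}$ of $\u$ is, up to a global complementation, one of the words listed in the hypothesis. This is a short computation with the table defining $f$: starting from $\u_i=\x$, letting $(a_{i+1},a_{i+2},a_{i+3})$ run over $\{0,1\}^3$ produces exactly the eight words $(\x\ \y^*\ \x^*\ \y^{**})$ with $\x^*\in\{\x,\x^c\}$ and $\y^*,\y^{**}\in\{\y,\y^c\}$, and starting from $\u_i=\y$ produces the eight words $(\y\ \x^*\ \y^*\ \x^{**})$; starting from $\x^c$ or $\y^c$ gives the complements of these. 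By hypothesis every word of the first two kinds is $2\ell-1$ conflict free, hence by Remark \ref{conflict free comp} so is its complement; consequently every four consecutive blocks of $\u$ form a $2\ell-1$ conflict free string, and therefore so does every substring of $\u$ contained in at most four consecutive blocks.

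It remains to confine any hypothetical conflict of $\u$ to four consecutive blocks. Suppose, for a contradiction, that $\u$ contains $ww$ with $t:=|w|$ and $1\le t\le 2\ell-1$; pick such a conflict with $t$ as small as possible, occupying positions $[p,p+2t-1]$, so that $\u[k]=\u[k+t]$ for every $k\in[p,p+t-1]$. Its length is $2t\le 4\ell-2$. If $2t\le 3\ell+1$ then, wherever $p$ sits inside its block, $ww$ lies in at most four consecutive blocks, contradicting the previous step. Otherwise $2t\ge 3\ell+2$, which forces $\ell\ge 4$ and $(3\ell+2)/2\le t\le 2\ell-1$ (so $t\ge\ell+2$); in this range $ww$ must meet five consecutive blocks $\u_j,\dots,\u_{j+4}$, start inside $\u_j$, and contain $\u_{j+1}\u_{j+2}\u_{j+3}$ in full, with the whole block $\u_{j+1}$ sitting inside the first copy of $w$. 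Hence each position $k$ of $\u_{j+1}$ satisfies $\u[k]=\u[k+t]$, and since $t\ge\ell+2$ and $t\le 2\ell-1$ the index $k+t$ lands inside the blocks $\u_{j+2}$ (type $\x$) and $\u_{j+3}$ (type $\y$). Reading off the part that lands in $\u_{j+3}$: the last $t-\ell$ symbols of $\u_{j+1}$ equal the first $t-\ell$ symbols of $\u_{j+3}$, and both blocks lie in $\{\y,\y^c\}$; depending on the two signs this makes $\y$ (or $\y^c$) periodic, respectively anti-periodic, with period $q:=2\ell-t<\ell/2$, and since $\ell\ge 2q$ a short Fine--Wilf style computation then yields a conflict of type smaller than $t$ inside $\u_{j+1}$ (or its two-block neighbourhood), contradicting both the minimality of $t$ and the previous step. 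Either way we obtain a contradiction, so $\u$ is $2\ell-1$ conflict free, as claimed.

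The genuinely delicate point is this last case: excluding a conflict that straddles five consecutive blocks when $\ell$ is large. One has to track precisely which block positions the shift-by-$t$ map identifies and then convert the resulting (anti-)periodicity of $\y$ into an honest short conflict within one or two blocks; when the two relevant $\y$-blocks carry opposite signs and $\ell$ is large this seems to need, in addition, the constraint relating $\u_{j+1}$ to the intervening type-$\x$ block $\u_{j+2}$, or an iteration of the periodicity relation. Everything else — the alternation of types, the reduction by complementation, the enumeration of the four-block windows, and the fact that a substring of a conflict free string is conflict free — is routine bookkeeping. (One should also note that the hypothesis forces $\x,\x^c,\y,\y^c$ to be four distinct strings, ruling out the degenerate situation $\x=\y^c$.)
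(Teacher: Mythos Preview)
Your overall strategy matches the paper's: argue that every four-block window $\u_i\u_{i+1}\u_{i+2}\u_{i+3}$ is $2\ell-1$ conflict free and then deduce the same for the whole encoded string. The paper's argument is extremely brief --- it invokes Remark~\ref{conflict free comp} to pass from the hypothesis strings to their complements, declares every four-block window conflict free, and ends with ``the proof follows'' --- whereas you supply much more detail and, in particular, recognise and attempt the five-block case for $\ell\ge 4$, which the paper does not address at all.

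There is, however, a genuine gap in your second step. Read the hypothesis set carefully: the patterns are $(\x\ \y^*\ \x^*\ \y^*)$ and $(\y\ \x^*\ \y^*\ \x^*)$, where the \emph{same} symbol $\y^*$ (respectively $\x^*$) occupies both the second and the fourth slot. That is only eight strings in all, and in every one of them the second and fourth blocks coincide. You correctly enumerate the eight four-block windows with first block $\x$ as $(\x\ \y^*\ \x^*\ \y^{**})$ with \emph{independent} $\y^*,\y^{**}\in\{\y,\y^c\}$, and then assert ``by hypothesis'' that all eight are $2\ell-1$ conflict free. But the hypothesis only covers the four with $\y^*=\y^{**}$; a mixed window such as $(\x\ \y\ \x\ \y^c)$ --- which really occurs, e.g.\ from $\u_i=\x$ and $(a_{i+1},a_{i+2},a_{i+3})=(0,1,1)$ in the table of Definition~\ref{KG Map def} --- is not among the hypothesis strings, and neither is its complement $(\x^c\ \y^c\ \x^c\ \y)$. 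To close this you would have to show that any conflict of length at most $4\ell-2$ in a mixed window already lies inside one of its three-block substrings $\x\y\x$ or $\y\x\y^c$ (each of which does embed in a hypothesis string), or else run a periodicity argument as in your five-block step; the paper's own proof glosses over exactly the same point when it jumps from ``all $(\x^*\ \y^*\ \x^*\ \y^*)$ are conflict free'' to ``$(\u_i\ \u_{i+1}\ \u_{i+2}\ \u_{i+3})$ is conflict free''.

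Your five-block analysis is a sensible addition and the periodicity idea is the right tool, but as you yourself flag, the anti-periodic branch ($\u_{j+1}=\u_{j+3}^c$) is not closed, and the portion of $\u_{j+1}$ that the shift by $t$ sends into $\u_{j+2}$ has to be used as well before Fine--Wilf gives a square of length $<t$.
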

\begin{proof}
From the Remark \ref{conflict free comp}, if any DNA string in the set $\{(\x\ \y^*\ \x^*\ \y^*), (\y\ \x^*\ \y^*\ \x^*): \x^*\in\{\x,\x^c\} \mbox{ and } \y^*\in\{\y,\y^c\}\}$ is $2\ell-1$ conflict free then all the DNA sub-strings $(\x^*\ \y^*\ \x^*\ \y^*)$ and $(\y^*\ \x^*\ \y^*\ \x^*)$ are also $2\ell-1$ conflict free. 
Therefore, in the encoded DNA string $\u$ = $(\u_1\ \u_2 \ldots \u_n)$ (using Encoding \ref{gen encod}), for $1\leq i\leq n-3$, consider $(\u_i\ \u_{i+1}\ \u_{i+2}\ \u_{i+3})$, which is also $2\ell-1$ conflict free, where $\u_i\in\{\x,\x^c,\y,\y^c\}$ and the proof follows.
\end{proof}
\begin{proposition}
For a positive integers $\ell$, if $\x,\y\in\Sigma_{DNA}^\ell$ such that each of the DNA strings $(\x\ \y)$, $(\x\ \y^c)$, $(\y\ \x)$ and $(\y\ \x^c)$ is $\ell$ conflict free then any binary string will be encoded into a $\ell$ conflict free DNA string using Encoding \ref{gen encod}.
\label{conflict free proposition}
\end{proposition}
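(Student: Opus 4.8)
The plan is to reduce the statement to the two-block building blocks of the encoding. First I would record the block structure produced by $f$: inspecting Definition~\ref{KG Map def}, a block lying in $\{\x,\x^c\}$ is always sent to a block lying in $\{\y,\y^c\}$ and conversely, so the encoded word $\u=(\u_1\ \u_2\ \ldots\ \u_n)$ is a concatenation of length-$\ell$ blocks whose types strictly alternate. Hence every pair of consecutive blocks $(\u_i\ \u_{i+1})$ is a string of the form $(\x^*\ \y^*)$ or $(\y^*\ \x^*)$ with $\x^*\in\{\x,\x^c\}$, $\y^*\in\{\y,\y^c\}$, and every single block $\u_i$ is a length-$\ell$ prefix of such a string.

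Next I would use Remark~\ref{conflict free comp}: since $\ell$ conflict freeness is preserved under complementation, the four hypotheses on $(\x\ \y)$, $(\x\ \y^c)$, $(\y\ \x)$, $(\y\ \x^c)$ upgrade to the statement that \emph{all eight} strings $(\x^*\ \y^*)$ and $(\y^*\ \x^*)$ are $\ell$ conflict free, and therefore so is each individual block $\u_i$. Now suppose for contradiction that the encoded string contains a consecutive repetition $ww$ with $1\le|w|=t\le\ell$. Then $|ww|=2t\le 2\ell$, so $ww$ meets at most three consecutive blocks. If $ww$ lies inside one block, or inside two consecutive blocks, it is contained in one of the strings just shown to be $\ell$ conflict free, a contradiction; so only the three-block case remains.

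The remaining, and hardest, case is when $ww$ genuinely overlaps three consecutive blocks $\u_i\ \u_{i+1}\ \u_{i+2}$, which forces $2t>\ell$. Here $ww=S\,\u_{i+1}\,P$ with $S$ a nonempty proper suffix of $\u_i$, $P$ a nonempty proper prefix of $\u_{i+2}$, and $|S|+|P|=2t-\ell$. Matching the first and second halves of $ww$ against one another yields three consequences: $\u_{i+1}$ itself has period $t$ (equivalently $\u_{i+1}$ has the shape $P'MP'$ with $|P'|=\ell-t$, $|M|=2t-\ell$), $S$ equals a length-$|S|$ factor of $\u_{i+1}$ read at offset $t$, and $P$ equals a length-$|P|$ factor of $\u_{i+1}$ read at offset $-t$. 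Chaining these with the periodicity of $\u_{i+1}$ propagates the period $t$ leftward into $\u_i$ and rightward into $\u_{i+2}$, producing a period-$t$ stretch of length $2t-|P|$ lying entirely inside $(\u_i\ \u_{i+1})$ and one of length $2t-|S|$ lying entirely inside $(\u_{i+1}\ \u_{i+2})$. One would like one of these stretches to reach length $2t$, which would be a square inside a two-block window and would finish the proof; but each falls short by exactly $|P|$, resp.\ $|S|$, characters, so closing the argument requires exploiting the $P'MP'$-shape of $\u_{i+1}$ together with the precise way $S$ and $P$ embed into it. I expect this three-block bookkeeping to be the real obstacle; if it cannot be carried out from the two-block hypothesis alone, the natural remedy---mirroring the four-block argument used in Theorem~\ref{conflict free theorem}---is to assume instead that every three-block window $(\x^*\ \y^*\ \x^{**})$ and $(\y^*\ \x^*\ \y^{**})$ is $\ell$ conflict free, after which the three-block case is immediate and the proof is complete.
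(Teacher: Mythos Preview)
The paper gives \emph{no} proof of this proposition; it is stated bare, immediately after Theorem~\ref{conflict free theorem}, presumably as an ``obvious'' weakening of that argument. Your reduction to two- and three-block windows is correct, and your instinct that the three-block case may not follow from the two-block hypothesis alone is exactly right---in fact it does not, and the proposition as stated is \emph{false}.

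A concrete counterexample for $\ell=3$: take $\x=ACG$ and $\y=TCG$, so $\x^c=TGC$, $\y^c=AGC$. One checks directly that the four strings
\[
(\x\ \y)=ACGTCG,\quad (\x\ \y^c)=ACGAGC,\quad (\y\ \x)=TCGACG,\quad (\y\ \x^c)=TCGTGC
\]
are each $3$ conflict free (no square of period $1$, $2$ or $3$), so the hypothesis holds. But with $h(0)=\x$ the binary string $(0\ 0\ 0)$ encodes, via $f(\x,0)=\y$ and $f(\y,0)=\x^c$, to
\[
(\x\ \y\ \x^c)=ACG\,TCG\,TGC=ACGTCGTGC,
\]
which contains the period-$3$ square $CGT\,CGT$ at positions $2$--$7$. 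Hence the encoded DNA string is \emph{not} $3$ conflict free.

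So the ``real obstacle'' you flagged is genuine and unfixable under the stated hypothesis: no amount of three-block bookkeeping will close it, because the conclusion simply fails. Your proposed remedy---assuming every three-block window $(\x^*\ \y^*\ \x^{**})$ and $(\y^*\ \x^*\ \y^{**})$ is $\ell$ conflict free---is the natural correct hypothesis; under it your argument goes through immediately, since a square of length at most $2\ell$ fits inside some three-block window. This is exactly the pattern of Theorem~\ref{conflict free theorem}, where a four-block hypothesis yields $2\ell-1$ conflict freeness.
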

\begin{theorem}
For a positive integers $\ell$, if $\x,\y\in\Sigma_{DNA}^\ell$ such that each of the DNA strings $(\x\ \y\ \x)$, $(\x\ \y\ \x^c)$, $(\x\ \y^c\ \x)$ and $(\x\ \y^c\ \x^c)$ is free from reverse-complement sub-string(s) then the encoded DNA string using Encoding \ref{gen encod} is free from hairpin like secondary structure. 
\label{hair pin theorem}
\end{theorem}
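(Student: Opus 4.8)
The plan is to establish the stronger conclusion that the encoded string $\u$ is free from reverse-complement sub-strings of length more than $2$; by Definition~\ref{kg hairpin definition} this is exactly what ``free from hairpin like secondary structure (with stem length more than $2$)'' means, so it implies the theorem.

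First I would reduce to sub-strings of length exactly $3$. If $\u$ contained sub-strings $\mathbf{w}$ and $\mathbf{w}^{rc}$ with $|\mathbf{w}|=k\geq 3$, write $\mathbf{w}=w_1w_2\cdots w_k$; then the length-$3$ prefix $\z=w_1w_2w_3$ of $\mathbf{w}$ and the length-$3$ suffix $w_3^cw_2^cw_1^c=\z^{rc}$ of $\mathbf{w}^{rc}$ both occur in $\u$, and $\z\neq\z^{rc}$ because the middle nucleotide of a length-$3$ string cannot equal its own complement. Hence it suffices to show that no two length-$3$ sub-strings of $\u$ are reverse-complements of each other.

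Next I would use the block structure from Encoding~\ref{gen encod}. Reading Table~\ref{Gen Map}($a$), the map $f$ sends $\{\x,\x^c\}\times\{0,1\}$ into $\{\y,\y^c\}$ and $\{\y,\y^c\}\times\{0,1\}$ into $\{\x,\x^c\}$, so the blocks $\u_1,\u_2,\ldots,\u_n$ strictly alternate between $\{\x,\x^c\}$ and $\{\y,\y^c\}$. Therefore every length-$3$ sub-string of $\u$ lies within three consecutive blocks $(\u_i\ \u_{i+1}\ \u_{i+2})$, and such a window has the shape $(\x^*\ \y^*\ \x^{**})$ or $(\y^*\ \x^*\ \y^{**})$ with $\x^*,\x^{**}\in\{\x,\x^c\}$, $\y^*,\y^{**}\in\{\y,\y^c\}$. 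By Remark~\ref{kg hairpin remark} the four hypothesised strings $(\x\ \y\ \x)$, $(\x\ \y\ \x^c)$, $(\x\ \y^c\ \x)$, $(\x\ \y^c\ \x^c)$ being free from reverse-complement sub-strings forces the same for their complements $(\x^c\ \y^c\ \x^c)$, $(\x^c\ \y^c\ \x)$, $(\x^c\ \y\ \x^c)$, $(\x^c\ \y\ \x)$, hence for all eight strings of shape $(\x^*\ \y^*\ \x^{**})$; consequently the two-block patterns $(\x^*\ \y^*)$, $(\y^*\ \x^*)$ and the single blocks $\x,\x^c,\y,\y^c$ are also free from reverse-complement sub-strings, which already settles the windows of shape $(\y^*\ \x^*\ \y^{**})$ (each length-$3$ sub-string of such a window is confined to two of its blocks).

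Finally, suppose for contradiction that $\u$ contains length-$3$ sub-strings $\z$ and $\z^{rc}$; both lie within at most two consecutive blocks, say $\z\subseteq(\u_i\ \u_{i+1})$ and $\z^{rc}\subseteq(\u_j\ \u_{j+1})$. Rewriting $\z^{rc}\subseteq(\u_j\ \u_{j+1})$ as $\z\subseteq(\u_j\ \u_{j+1})^{rc}$ and running a case analysis on whether these two pairs are of $(\x^*\ \y^*)$- or $(\y^*\ \x^*)$-type and on which of $\x,\x^c,\y,\y^c$ actually occur, in each case I would exhibit one of the eight reverse-complement-free strings above (or a complement of a hypothesised string, via Remark~\ref{kg hairpin remark}) that simultaneously contains a copy of $\z$ and a copy of $\z^{rc}$, contradicting the fact that that string is free from reverse-complement sub-strings. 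The main obstacle is exactly this last bookkeeping: unlike a ``conflict'' (a purely local pattern, as in the proof of Theorem~\ref{conflict free theorem}), a reverse-complement pair consists of two sub-strings that may sit in far-apart windows of $\u$, so one must verify, for every admissible combination of the block values $\x,\x^c,\y,\y^c$, that both occurrences can be pulled back into a single one of the four hypothesised length-$3\ell$ strings (or a complement of one). This is where all four hypotheses, rather than merely each individual window being reverse-complement-free, are genuinely used, and where the interplay of the reverse, complement and reverse-complement operations with the $\x\leftrightarrow\x^c$ and $\y\leftrightarrow\y^c$ freedom must be tracked with care.
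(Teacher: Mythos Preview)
The paper's own argument is exactly the ``local'' one you warn against: it uses Remark~\ref{kg hairpin remark} to pass from the four hypothesised triplets to all three-block windows $(\x^*\ \y^*\ \x^{**})$ and $(\y^*\ \x^*\ \y^{**})$, and then concludes directly via Definition~\ref{kg hairpin definition} and Encoding~\ref{gen encod}, in complete analogy with the proof of Theorem~\ref{conflict free theorem}. Your recognition that a reverse-complement pair is a global, two-window phenomenon rather than a local pattern, and that one must therefore argue more carefully than in Theorem~\ref{conflict free theorem}, is a genuine addition beyond what the paper supplies.

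That said, the case analysis you sketch in the last paragraph cannot be completed as described. Consider the situation in which $\z$ lies entirely inside a $\y$-block of $\u$ and $\z^{rc}$ lies entirely inside a $\y^c$-block; this occurs precisely when $\y$ contains both $\z$ and its reverse $\z^r$ as sub-strings. No triplet of the form $(\x^*\ \y^*\ \x^{**})$ carries both a $\y$-block and a $\y^c$-block, so there is no single hypothesised string (or complement thereof) into which both occurrences can be pulled. Your rewriting $\z\subseteq(\u_j\ \u_{j+1})^{rc}$ does not rescue the argument either, since $(\u_j\ \u_{j+1})^{rc}=(\u_{j+1}^{rc}\ \u_j^{rc})$ introduces the blocks $\x^{rc}$ and $\y^{rc}$, which in general do not belong to $\{\x,\x^c,\y,\y^c\}$ and hence are not sub-strings of any hypothesised triplet. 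Nothing in the four hypotheses visibly forbids $\y$ from containing both $\z$ and $\z^r$ while all four triplets remain free from reverse-complement sub-strings, so this case is a real obstruction: closing it requires either an additional assumption on $\x,\y$ or an argument different from the one you outline.
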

\begin{proof}
From Remark \ref{kg hairpin remark}, if $(\x\ \y\ \x)$ is free from reverse-complement sub-string(s) then all of the corresponding strings $(\x^c\ \y^c\ \x^c)$, $(\y\ \x\ \y)$ and $(\y^c\ \x^c\ \y^c)$ will be free from reverse-complement sub-string(s).
Similarly remaining all $12$ triplets are also free from reverse-complement sub-string(s).
Hence, from Definition \ref{kg hairpin definition} and Encoding \ref{gen encod}, the encoded DNA string obtained from any binary string will be free from hairpin like secondary structure.
\end{proof}
The Theorem \ref{conflict free dna string} imposes a condition on binary strings such that the encoded DNA strings will be complete conflict free.
\begin{theorem}
For positive integers $n$ and $\ell$, consider $\x,\y\in\Sigma_{DNA}^\ell$ such that the DNA strings $(\x\ \y)$, $(\x\ \y^c)$, $(\y\ \x)$ and $(\y\ \x^c)$ are $\ell$ conflict free.
If $\a$ = $(a_1\ a_2\ldots a_n)$ is a binary string such that $2\mu<\sum_{i=\lambda+1}^{\lambda+2\mu} (a_ia_{2\mu+i}+\bar{a}_i\bar{a}_{2\mu+i})$ for each positive even integer $2\mu$ from the set $\{1,2,\ldots,\lfloor n/2\rfloor\}$ and $\lambda=0,1,\ldots, n-2\mu$ then the binary string $\a$ will be encoded (using Encoding \ref{gen encod}) into a complete conflict free DNA string of length $n\ell$.
\label{conflict free dna string}
\end{theorem}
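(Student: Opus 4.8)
The plan is to combine Proposition~\ref{conflict free proposition} with a short analysis of the block structure of the encoded word, transported back to the binary string through the affine recursion implicit in the table of Definition~\ref{KG Map def}. The hypotheses of the theorem are precisely those of Proposition~\ref{conflict free proposition}, so the encoded string $\u=(\u_1\ \u_2\ \ldots\ \u_n)$ is already $\ell$ conflict free; hence, were it not complete conflict free, it would contain a consecutive repeat $ww$ of some length $|w|=t$ with $\ell<t\le\lfloor n\ell/2\rfloor$, say at nucleotide positions $[p,p+2t-1]$. Two facts come for free: first, reading off the mapping table, the length-$\ell$ blocks $\u_j$ strictly alternate between the type-$\x$ set $\{\x,\x^c\}$ and the type-$\y$ set $\{\y,\y^c\}$, and these sets are disjoint, since any coincidence would make $(\x\ \y)$ or $(\x\ \y^c)$ equal to $(\x\ \x)$, which is not $\ell$ conflict free; second, no DNA nucleotide equals its own complement.

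The heart of the argument is to show that $t$ is an even multiple of $\ell$ and that the repeat is, up to two partial end blocks, a run of whole blocks. Write $t=\kappa\ell+r$ with $0\le r<\ell$ and write $\u[q]$ for the $q$-th nucleotide of $\u$. If $r\neq0$, then propagating the periodicity relation $\u[q]=\u[q+t]$ across the block boundaries inside $w$ forces (possibly after combining the relations over several blocks, and via Fine--Wilf) either some length-$\ell$ block to be periodic with period $\gcd(r,\ell)$, a proper divisor of $\ell$, so that the block contains a consecutive repeat of length $\le\ell$ --- impossible, since every block is a factor of one of the $\ell$-conflict-free words $(\x\ \y),(\x\ \y^c),(\y\ \x),(\y\ \x^c)$ --- or forces a full type-$\x$ block to coincide with a full type-$\y$ block, impossible by disjointness. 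Hence $r=0$, $t=\kappa\ell$. Writing $w$ in block coordinates as a suffix of $\u_a$, whole blocks $\u_{a+1},\ldots,\u_{a+\kappa-1}$, and a prefix of $\u_{a+\kappa}$, and equating $w$ with its shift by $\kappa$ blocks, yields $\u_{a+i}=\u_{a+\kappa+i}$ for $1\le i\le\kappa-1$, together with equality of the relevant suffixes of $\u_a,\u_{a+\kappa}$ and prefixes of $\u_{a+\kappa},\u_{a+2\kappa}$. Two same-type blocks that agree on their whole length are equal, so $\kappa$ is even, say $\kappa=2\mu$; and, because no nucleotide equals its complement, the partial suffix/prefix agreements upgrade to $\u_a=\u_{a+2\mu}$ and $\u_{a+2\mu}=\u_{a+4\mu}$. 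In every case one gets $\u_{a+i}=\u_{a+2\mu+i}$ for a run of roughly $2\mu$ consecutive indices $i$, all inside $\{1,\ldots,n\}$; indeed $4\mu\ell=2t\le n\ell$, so $2\mu\le\lfloor n/2\rfloor$.

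It remains to transport the block equalities to the binary word. Put $b_j=0$ if $\u_j\in\{\x,\y\}$ and $b_j=1$ otherwise, and $c_j=0$ on $\{\x,\x^c\}$, $c_j=1$ on $\{\y,\y^c\}$, so $c_j$ alternates. Matching the four rows of the mapping table against $\u_j=f(\u_{j-1},a_j)$ gives the recursion $b_j=b_{j-1}\oplus a_j\oplus c_{j-1}$, so $b_j$ and $c_j$ are affine functions of $a_1,\ldots,a_j$ modulo $2$. Hence $\u_j=\u_{j+2\mu}$ holds exactly when $\bigoplus_{k=j+1}^{j+2\mu}a_k=\mu\bmod 2$ (the $2\mu$ alternating bits $c_j,\ldots,c_{j+2\mu-1}$ contributing $\mu\bmod 2$), and subtracting these identities for consecutive $j$ cancels the constant and leaves $a_k=a_{k+2\mu}$, i.e. $a_ka_{2\mu+k}+\bar a_k\bar a_{2\mu+k}=1$, for a run of roughly $2\mu$ consecutive indices $k$ in $\{1,\ldots,n\}$. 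This run of coincidences is exactly what the arithmetic condition on $\a$ in the statement (for the even integer $2\mu$ and the corresponding $\lambda$) rules out, a contradiction. Therefore $\u$ has no consecutive repeat of any length, i.e. $\u$ is complete conflict free.

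The reduction (a direct appeal to Proposition~\ref{conflict free proposition}) and the final translation (affine algebra modulo $2$) are routine; I expect the real work to lie in the middle step --- proving from first principles that the period of any long repeat is an even multiple of the block length and that the repeat is built from whole blocks up to two partial ends. That step leans on the $\ell$-conflict-freeness of all four boundary words $(\x\ \y),(\x\ \y^c),(\y\ \x),(\y\ \x^c)$ (and, through Remark~\ref{conflict free comp}, of their complements) to exclude short internal periods, on the non-self-complementarity of nucleotides, and on careful bookkeeping of offsets when the repeat begins inside a block --- in particular on matching the length of the resulting run of binary coincidences exactly to the window in the statement.
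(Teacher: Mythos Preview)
Your plan is compatible with the paper's, but far more ambitious. The paper's proof is only a few lines: it records that $a_ia_{2\mu+i}+\bar a_i\bar a_{2\mu+i}=1$ iff $a_i=a_{2\mu+i}$, observes that for even offset $2\mu$ the block $\u_{2\mu+i}$ lies in $\{\u_i,\u_i^c\}$, and then simply invokes Encoding~\ref{gen encod} and Proposition~\ref{conflict free proposition}. There is no discussion of why a consecutive repeat in $\u$ must have period an even multiple of $\ell$, nor why it must be essentially block-aligned; the paper treats the passage from ``$\ell$ conflict free plus no even block-period repeat'' to ``complete conflict free'' as immediate. The middle step you single out as the real work is entirely absent from the paper's argument, so what you are writing is not an alternative route but a completion of it.

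Two points on your outline. First, the odd-$\kappa$ case ($t=\kappa\ell$) follows cleanly from your type-disjointness observation once you locate one full block together with its $\kappa$-shift inside the repeat window; this is available as soon as $\kappa\ge 2$, which holds since $t>\ell$. The $r\neq 0$ case is where the sketch is thinnest: a single Fine--Wilf application does not obviously produce a period dividing $\ell$ in one block, and you should expect to iterate the shift-by-$t$ relation across several block boundaries before the overlaps collapse to a proper divisor. Second, watch the count at the end: from $2\mu$ block equalities $\u_j=\u_{j+2\mu}$ the affine recursion yields only $2\mu-1$ binary equalities $a_k=a_{k+2\mu}$, one short of the window in the hypothesis. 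Your remark that the partial end-block equalities ``upgrade'' in the non-aligned case is exactly what recovers the missing index, but in the perfectly block-aligned case you will need to argue separately (or accept that the statement, which already carries the inequality typo $2\mu<\sum$ in place of the intended $\sum<2\mu$, may itself be off by one).
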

\begin{proof}
Consider a binary string $\a$ = $(a_1\ a_2\ldots a_n)$ which is encoded into DNA string $\u$ = $(u_1\ u_2\ldots u_n)$ using Encoding \ref{gen encod}.
For each positive even integer $2\mu$ from the set $\{1,2,\ldots,\lfloor n/2\rfloor\}$, the DNA block $u_{2\mu+i}\in\{u_i,u_i^c\}$.
For any binary symbol $a_i$, $a_{2\mu+i}\in\{0,1\}$, 
\[
(a_ia_{2\mu+i}+\bar{a}_i\bar{a}_{2\mu+i})= 
\left\{
\begin{array}{ll}
    1 & \mbox{if } \mbox{ $a_i=a_{2\mu+i}$} \\
    0 & \mbox{otherwise. }
\end{array}
\right.
\]
Therefore, $\sum_{i=1}^{2\mu} (a_ia_{2\mu+i}+\bar{a}_i\bar{a}_{2\mu+i})=2\mu$ if and only if $a_i=a_{2\mu+i}$ for each $i=1,2,\ldots,2\mu$. 
If the origin is shifted with $\lambda$, and $2\mu<\sum_{i=\lambda+1}^{\lambda+2\mu} (a_ia_{2\mu+i}+\bar{a}_i\bar{a}_{2\mu+i})$ for each $\lambda$ and $\mu$, then from Encoding \ref{gen encod} and Proposition \ref{conflict free proposition}, the encoded DNA string will be a complete conflict free.
\end{proof}
The $GC$ content of encoded DNA string is calculated in the following lemma.
\begin{lemma}
For positive integers $n$ and $\ell$, consider $\x,\y\in\Sigma_{DNA}^\ell$ with $GC$ content $g_\x$ and $g_\y$. 
For the encoded DNA string $\u$ = $(\u_1\ \u_2 \ldots \u_n)\in\{\x,\x^c,\y,\y^c\}^n$ using Encoding \ref{gen encod}, the $GC$ content of $\u$ will be 
\begin{equation*}
    g_\u = 
    \left\{
    \begin{array}{ll}
        g_\x+(g_\x+g_\y)(n-1)/2 & \mbox{ if } n \mbox{ is odd integer} \\
        (g_\x+g_y)n/2 & \mbox{ if } n \mbox{ is even integer,}
    \end{array}
    \right.
\end{equation*}
where $\u_1\in\{\x,\x^c\}$.
\label{kg gen gc cont}
\end{lemma}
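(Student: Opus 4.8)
The plan is to reduce the computation to two elementary facts: complementation preserves $GC$ content, and the block sequence $\u_1,\u_2,\ldots,\u_n$ produced by Encoding \ref{gen encod} strictly alternates between blocks of ``type $\x$'' (elements of $\{\x,\x^c\}$) and ``type $\y$'' (elements of $\{\y,\y^c\}$), a structure already recorded by Equation (\ref{equ b}).

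First I would note that for any DNA string $\z$ the map $\z\mapsto\z^c$ interchanges $G\leftrightarrow C$ (and $A\leftrightarrow T$), so it preserves the number of $\{G,C\}$-positions; hence $g_{\x^c}=g_\x$ and $g_{\y^c}=g_\y$. Since $\u$ is the concatenation of the pairwise disjoint substrings $\u_1,\ldots,\u_n$, its $GC$ content is the sum of the $GC$ contents of the blocks, and by the previous remark each block contributes $g_\x$ if it lies in $\{\x,\x^c\}$ and $g_\y$ if it lies in $\{\y,\y^c\}$. Thus $g_\u$ equals $g_\x$ times the number of blocks of type $\x$ plus $g_\y$ times the number of blocks of type $\y$.

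Next I would read off from Table \ref{Gen Map}(a) that $f$ always maps a type-$\x$ block to a type-$\y$ block and conversely; combined with $\u_i=f(\u_{i-1},a_i)$ for $i\geq2$ and the hypothesis $\u_1\in\{\x,\x^c\}$, a one-line induction shows that $\u_i$ is of type $\x$ for odd $i$ and of type $\y$ for even $i$. Counting indices then gives $\lceil n/2\rceil$ blocks of type $\x$ and $\lfloor n/2\rfloor$ blocks of type $\y$, so $g_\u=\lceil n/2\rceil\,g_\x+\lfloor n/2\rfloor\,g_\y$. Splitting by the parity of $n$ yields $(g_\x+g_\y)n/2$ when $n$ is even and $\frac{n+1}{2}g_\x+\frac{n-1}{2}g_\y=g_\x+(g_\x+g_\y)(n-1)/2$ when $n$ is odd, which is the claimed formula. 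There is no genuine obstacle here; the only points needing a little care are verifying the alternation of block types from the mapping table and keeping the parity bookkeeping straight.
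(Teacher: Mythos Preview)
Your proposal is correct and follows essentially the same route as the paper: both arguments rest on the facts that complementation preserves $GC$ content and that the blocks alternate between $\{\x,\x^c\}$ and $\{\y,\y^c\}$, then count the contributions. Your write-up is a bit more explicit about justifying the alternation and the parity bookkeeping, but there is no substantive difference.
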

\begin{proof}
For positive integers $n$ and $\ell$ ($<n$), let a binary string $\a$ = $(a_1\ a_2 \ldots a_n)\in\{0,1\}^n$ be encoded into some $\u$ = $(\u_1\ \u_2 \ldots \u_n)\in\{\x,\x^c,\y,\y^c\}^n$ using Encoding \ref{gen encod}.
In the Encoding \ref{gen encod}, if $\u_1\in\{\x,\x^c\}$ then DNA blocks $\u_{2j}\in\{\y,\y^c\}$ and $\u_{2j+1}\in\{\x,\x^c\}$, for $1\leq2j,2j+1\leq n$. 
Since, the $GC$ content of a DNA string and its complement DNA string are the same, the $GC$ content of each sub-string ($\u_{2j}\ \u_{2j+1}$) is $g_\x+g_\y$.
Hence, if $n$ is even, the $GC$ content of the encoded DNA string $\u$ is $g_\u$ = $(g_\x+g_y)n/2$ and if $n$ is odd then $g_\u$ = $g_\x+(g_\x+g_y)(n-1)/2$.
\end{proof}
\begin{remark}
Note that in Lemma \ref{kg gen gc cont}, if $\u_1\in\{\y,\y^c\}$ then 
\begin{equation*}
    g_\u = 
    \left\{
    \begin{array}{ll}
        g_\y+(g_\x+g_\y)(n-1)/2 & \mbox{ if } n \mbox{ is odd integer} \\
        (g_\x+g_y)n/2 & \mbox{ if } n \mbox{ is even integer}.
    \end{array}
    \right.
\end{equation*}
\label{kg gen gc cont remark}
\end{remark}
The following Theorem ensures that the $GC$ content of the encoded DNA string (using Encoding \ref{gen encod}) is almost $50\%$ of the length.
\begin{theorem}
For a positive integer $n$ and $\ell$, let $\x,\y\in\Sigma_{DNA}^\ell$.
If the $GC$ content of $\x$ and $\y$ are $\lfloor\ell/2\rfloor$ and $\lceil\ell/2\rceil$ respectively, then the $GC$ content of any encoded DNA string $\u$ (using Encoding \ref{gen encod}) of length $n\ell$ is 
\begin{equation*}
    g_\u = 
    \left\{
    \begin{array}{ll}
        \lfloor n\ell/2\rfloor & \mbox{ if } \u_1\in\{\x,\x^c\}, \mbox{ and } \\
        \lceil n\ell/2 \rceil & \mbox{ if } \u_1\in\{\y,\y^c\}.
    \end{array}\right.
\end{equation*}
\label{KG GC cont}
\end{theorem}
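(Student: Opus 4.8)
The plan is to reduce everything to Lemma~\ref{kg gen gc cont} and Remark~\ref{kg gen gc cont remark} by specializing $g_\x=\lfloor\ell/2\rfloor$ and $g_\y=\lceil\ell/2\rceil$. The one algebraic fact that does all the work is the identity
\[
g_\x+g_\y=\left\lfloor\tfrac{\ell}{2}\right\rfloor+\left\lceil\tfrac{\ell}{2}\right\rceil=\ell,
\]
valid for every positive integer $\ell$. So after substitution the ``$(g_\x+g_\y)$'' terms appearing in Lemma~\ref{kg gen gc cont} become simply $\ell$, and the remaining task is a short parity bookkeeping to match the resulting expressions with $\lfloor n\ell/2\rfloor$ and $\lceil n\ell/2\rceil$.

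First I would dispose of the even-$n$ case. By Lemma~\ref{kg gen gc cont} (or Remark~\ref{kg gen gc cont remark}), for even $n$ the $GC$ content is $(g_\x+g_\y)n/2=\ell n/2$, which is an integer because $n$ is even; hence it equals both $\lfloor n\ell/2\rfloor$ and $\lceil n\ell/2\rceil$, so the claimed formula holds regardless of whether $\u_1\in\{\x,\x^c\}$ or $\u_1\in\{\y,\y^c\}$.

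Next I would treat odd $n$. Here $n-1$ is even, so $\ell(n-1)/2$ is an integer, and writing $n\ell/2=\ell(n-1)/2+\ell/2$ gives
\[
\left\lfloor\tfrac{n\ell}{2}\right\rfloor=\tfrac{\ell(n-1)}{2}+\left\lfloor\tfrac{\ell}{2}\right\rfloor,
\qquad
\left\lceil\tfrac{n\ell}{2}\right\rceil=\tfrac{\ell(n-1)}{2}+\left\lceil\tfrac{\ell}{2}\right\rceil .
\]
If $\u_1\in\{\x,\x^c\}$, Lemma~\ref{kg gen gc cont} gives $g_\u=g_\x+(g_\x+g_\y)(n-1)/2=\lfloor\ell/2\rfloor+\ell(n-1)/2$, which by the first displayed identity equals $\lfloor n\ell/2\rfloor$. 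If $\u_1\in\{\y,\y^c\}$, Remark~\ref{kg gen gc cont remark} gives $g_\u=g_\y+(g_\x+g_\y)(n-1)/2=\lceil\ell/2\rceil+\ell(n-1)/2=\lceil n\ell/2\rceil$. This exhausts all cases, so the theorem follows.

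I do not expect any genuine obstacle here: the content of the statement is entirely contained in Lemma~\ref{kg gen gc cont} and Remark~\ref{kg gen gc cont remark}, and the only care needed is to keep the floor/ceiling of $\ell$ and of $n\ell$ in sync across the parity of $n$; the identity $\lfloor\ell/2\rfloor+\lceil\ell/2\rceil=\ell$ is what makes the two layers of rounding collapse into the clean statement.
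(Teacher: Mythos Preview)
Your proposal is correct and follows exactly the paper's approach: the paper's proof consists of the single sentence ``The Theorem follows from Lemma~\ref{kg gen gc cont} and Remark~\ref{kg gen gc cont remark}, where $g_\x=\lfloor\ell/2\rfloor$ and $g_\y=\lceil\ell/2\rceil$,'' and you have simply spelled out the routine floor/ceiling bookkeeping that the paper leaves implicit.
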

\begin{proof}
The Theorem follows from Lemma \ref{kg gen gc cont} and Remark \ref{kg gen gc cont remark}, where $g_\x$ = $\lfloor\ell/2\rfloor$ and $g_\y$ = $\lceil\ell/2\rceil$.
\end{proof}

\begin{theorem}
For positive integers $n$ and $\ell$, let $\x,\y\in\Sigma_{DNA}^\ell$.
Using Encoding \ref{gen encod}, if a binary string $(a_1\ a_2 \ldots a_n)$ is encoded into some $\u\in\{\x,\x^c,\y,\y^c\}^n$ then the binary string $(\bar{a}_1\ a_2 \ldots a_n)$ is encoded into $\u^c$, where $\bar{a}_1$ is the binary complement of $a_1$. 
 \label{kg comp special case}
\end{theorem}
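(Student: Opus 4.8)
The plan is to prove the statement by a short induction on the block index, showing that complementing only the first bit of the binary input complements every one of the $n$ nucleotide blocks of the encoded DNA string. First I would fix $\a=(a_1\ a_2\ldots a_n)$ with its encoding $\u=(\u_1\ \u_2\ldots\u_n)$ as in the hypothesis, and let $\v=(\v_1\ \v_2\ldots\v_n)$ denote the DNA string produced by Encoding \ref{gen encod} from $(\bar a_1\ a_2\ldots a_n)$. Since $\u^c=(\u_1^c\ \u_2^c\ldots\u_n^c)$ by definition, it suffices to prove $\v_i=\u_i^c$ for every $i=1,2,\ldots,n$.

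For the base case $i=1$, recall that Encoding \ref{gen encod} sets $\v_1=h(\bar a_1)$ and that $h$ satisfies $h(0)=h(1)^c$, equivalently $h(\bar b)=h(b)^c$ for $b\in\{0,1\}$; hence $\v_1=h(\bar a_1)=h(a_1)^c=\u_1^c$. For the inductive step, assume $\v_{i-1}=\u_{i-1}^c$ for some $i$ with $2\le i\le n$. Because $(\bar a_1\ a_2\ldots a_n)$ and $\a$ agree in all positions $\ge 2$, Encoding \ref{gen encod} gives $\v_i=f(\v_{i-1},a_i)=f(\u_{i-1}^c,a_i)$; then invoking the identity recorded at the end of Encoding \ref{gen encod}, namely $f(\u_{i-1},a_i)^c=f(\u_{i-1}^c,a_i)$, yields $\v_i=f(\u_{i-1},a_i)^c=\u_i^c$. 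This closes the induction, so $\v=(\u_1^c\ \u_2^c\ldots\u_n^c)=\u^c$, as claimed.

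I do not expect a genuine obstacle here: all the work is already packaged into the two elementary facts stated in Encoding \ref{gen encod} — the complementary behaviour of $h$ on a flipped bit and the relation $f(\cdot^c,a)=f(\cdot,a)^c$ — and the only thing to be careful about is that flipping $a_1$ leaves every later bit untouched, so the complement propagates cleanly down the recursion. (As an alternative one could read off from Table \ref{Gen Map}($a$) that, for each fixed binary digit, the map sends each column to the column of the complementary previous block, but the inductive argument above is shorter and avoids the case check.)
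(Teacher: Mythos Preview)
Your proposal is correct and essentially identical to the paper's own proof: both argue by induction on the block index, using $h(\bar a_1)=h(a_1)^c$ for the base case and the identity $f(\z^c,a)=f(\z,a)^c$ from Encoding \ref{gen encod} for the inductive step. There is nothing to add.
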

\begin{proof}
The proof is done using induction on the index $i$ ($i=1,2,\ldots,n$).
Now $f(\z,0)^c$ = $f(\z^c,0)$ and $f(\z,1)^c$ = $f(\z^c,1)$, for each $\z\in\{\x,\x^c,\y,\y^c\}$, from Definition \ref{KG Map def}. 
Consider the binary strings $(a_1\ a_2\ a_3\ldots a_n)$ and $(\bar{a}_1\ a_2\ a_3\ldots a_n)$, that are encoded into some DNA strings $(u_1\ u_2\ u_3\ldots u_n)$ and $(v_1\ v_2\ v_3\ldots v_n)$. 
From Encoding \ref{gen encod}, $u_1^c$ = $h(a_1)^c$ = $h(\bar{a}_1)$ = $v_1$. 
Let $u_i^c$ = $v_i$, for some $i\in\{1,2,\ldots,n\}$.
Consider $u_{i+1}^c$ = $f(u_i,a_{i+1})^c$ = $f(u^c_i,a_{i+1})$ = $f(v_i,a_{i+1})$ = $v_{i+1}$.
Therefore, from induction, the binary strings $(a_1\ a_2\ a_3\ldots a_n)$ and $(\bar{a}_1\ a_2\ a_3\ldots a_n)$ are encoded into DNA strings which are complement to each other.
Similarly, the analogous statement for reverse can be proved using induction.
\end{proof}
In the following two theorems, 
the hamming distance between two DNA strings is calculated for binary strings with hamming distance $1$ and $2$.
\begin{theorem}
For positive integers $n$ and $\ell$, 
consider the binary strings $\a$ = $(a_1\ a_2\ldots a_n)$ and $\b$ = $(a_1\ a_2\ldots a_{i-1}\ \bar{a}_i\ a_{i+1}\ldots a_n)$ ($1\leq i\leq n$), that are encoded into DNA strings $\u$ = $(\u_1\ \u_2\ldots\u_n)$ and $\v$ = $(\v_1\ \v_2\ldots\v_n)$, where $\bar{a}_i$ is binary complement of $a_i$. 
Than, $d_H(\u,\v)$ = $\ell(n-i+1)$.
\label{kg theorem comp. spe. case}
\end{theorem}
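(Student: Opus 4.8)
The plan is to establish a clean block-wise relationship between the two encoded strings and then simply read off the distance. Write $\u = (\u_1\ \u_2\ldots\u_n)$ and $\v = (\v_1\ \v_2\ldots\v_n)$, where each $\u_j,\v_j\in\{\x,\x^c,\y,\y^c\}$ is a DNA block of length $\ell$, so that $\u,\v\in\Sigma_{DNA}^{n\ell}$. First I would prove, by induction on the block index $j$, the claim that $\u_j=\v_j$ for $1\le j\le i-1$ and $\u_j=\v_j^c$ for $i\le j\le n$. Since $\a$ and $\b$ share the symbols $a_1,\ldots,a_{i-1}$, Encoding \ref{gen encod} forces $\u_1=h(a_1)=\v_1$ and then $\u_j=f(\u_{j-1},a_j)=f(\v_{j-1},a_j)=\v_j$ for $j<i$, settling the first part.

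For the switching index $j=i$: if $i=1$, then $\v_1=h(\bar a_1)=h(a_1)^c=\u_1^c$; if $i>1$, then $\u_{i-1}=\v_{i-1}$ from the first part, and using the identity $f(\z,\bar a)=f(\z,a)^c$ recorded in Encoding \ref{gen encod} (with $\bar a$ the binary complement of $a$) we get $\v_i=f(\v_{i-1},\bar a_i)=f(\u_{i-1},\bar a_i)=f(\u_{i-1},a_i)^c=\u_i^c$. For $j>i$ the bits $a_j$ again coincide for $\a$ and $\b$, and using $f(\z^c,a)=f(\z,a)^c$ together with the inductive hypothesis $\u_{j-1}=\v_{j-1}^c$ we obtain $\v_j=f(\v_{j-1},a_j)=f(\u_{j-1}^c,a_j)=f(\u_{j-1},a_j)^c=\u_j^c$, which completes the induction. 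Note the case $i=1$ recovers Theorem \ref{kg comp special case}.

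Finally I would compute the distance. For $1\le j\le i-1$ the blocks $\u_j$ and $\v_j$ coincide and contribute $0$ to $d_H(\u,\v)$. For $i\le j\le n$ the block $\v_j=\u_j^c$ is the complement of the DNA string $\u_j$, and since no nucleotide of $\Sigma_{DNA}$ equals its own Watson--Crick complement, $\u_j$ and $\u_j^c$ differ in every one of their $\ell$ positions, contributing $\ell$ each. Summing over the $n-i+1$ blocks indexed $i,i+1,\ldots,n$ yields $d_H(\u,\v)=\ell(n-i+1)$.

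There is essentially no hard step here: the only point requiring care is the bookkeeping at the switching block $j=i$, where one must invoke the appropriate complementation identity ($h(0)=h(1)^c$ when $i=1$, and $f(\z,\bar a)=f(\z,a)^c=f(\z^c,a)$ otherwise). Conceptually, the statement is the weight-$1$ instance of an isometry-type phenomenon: a single bit flip at position $i$ of the binary source propagates so as to complement exactly the trailing $n-i+1$ DNA blocks, each contributing its full length $\ell$ to the Hamming distance.
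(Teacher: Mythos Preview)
Your proof is correct and follows essentially the same approach as the paper's: establish that $\v_j=\u_j$ for $j<i$ and $\v_j=\u_j^c$ for $j\ge i$, then use $d_H(\x,\x^c)=d_H(\y,\y^c)=\ell$ to sum over blocks. The paper simply asserts this block-wise relationship ``from Encoding \ref{gen encod}'' in one line, whereas you spell out the induction and the separate handling of the switching index $j=i$; the added care is welcome but not a different method.
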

\begin{proof}
Consider the binary strings $\a$ = $(a_1\ a_2 \ldots a_{i-1}\ a_i\ a_{i+1}\ldots a_n)$ and $\b$ = $(a_1\ a_2 \ldots a_{i-1}\ \bar{a}_i\ a_{i+1}\ldots a_n)$ which are encoded into $\u$ = $(\u_1\ \u_2 \ldots \u_{i-1}\ \u_i\ \u_{i+1}\ldots \u_n)$ and $\v$ = $(\v_1\ \v_2 \ldots \v_{i-1}\ \v_i\ \v_{i+1}\ldots \v_n)$ respectively.
From Encoding \ref{gen encod}, $\v_j$ = $\u_j$ ($j=1,2,\ldots i-1$) and $\v_j$ = $\u_j^c$ ($j=i,i+1\ldots,n$).
Therefore, $d_H(\u,\v)$ = $\ell(n-i+1)$, since $d_H(\x,\x^c)$ = $d_H(\y,\y^c)$ = $\ell$.
\end{proof}

\begin{theorem}
For positive integers $n$ and $\ell$, 
consider the binary strings $\a$ = $(a_1\ a_2\ldots a_n)$ and $\b$ = $(a_1\ a_2\ldots a_{i-1}\ \bar{a}_i\ a_{i+1}\ldots a_{j-1}\ \bar{a}_j\ a_{j+1}\ldots a_n)$ ($1\leq i<j\leq n$) that are encoded into DNA strings $\u$ = $(\u_1\ \u_2\ldots\u_n)$ and $\v$ = $(\v_1\ \v_2\ldots\v_n)$, where $\bar{a}_i$ is binary complement of $a_i$.
Than, $d_H(\u,\v)$ = $\ell(j-i)$.
\label{kg hamming 2 theorem}
\end{theorem}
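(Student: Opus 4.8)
The plan is to determine, block by block, how $\v$ differs from $\u$ and then add up the contributions to the Hamming distance. First I would record the single‑flip behaviour already implicit in the proof of Theorem \ref{kg theorem comp. spe. case}: using the identity $f(\u_{k-1},a_k)^c = f(\u_{k-1}^c,a_k) = f(\u_{k-1},\bar{a}_k)$ from Encoding \ref{gen encod}, an induction on $k$ shows that the encoding $\u'$ of $(a_1\ \ldots\ a_{i-1}\ \bar{a}_i\ a_{i+1}\ \ldots\ a_n)$ satisfies $\u'_k = \u_k$ for $k < i$ and $\u'_k = \u_k^c$ for $k \geq i$.

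Next I would apply this statement a second time, now with the roles of $\u$ and $\a$ played by $\u'$ and $(a_1\ \ldots\ \bar{a}_i\ \ldots\ a_n)$: since $\b$ is obtained from that string by flipping coordinate $j$, the same fact gives $\v_k = \u'_k$ for $k < j$ and $\v_k = (\u'_k)^c$ for $k \geq j$. Chaining the two relations over the three ranges yields $\v_k = \u_k$ for $k < i$, $\v_k = \u_k^c$ for $i \leq k < j$, and $\v_k = (\u_k^c)^c = \u_k$ for $k \geq j$. Hence $\u$ and $\v$ coincide on every block except the $j-i$ blocks indexed $i, i+1, \ldots, j-1$, on which $\v_k$ is the complement of $\u_k$.

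Finally I would count: each block $\u_k \in \{\x,\x^c,\y,\y^c\}$ differs from its own complement in all $\ell$ positions, so $d_H(\u_k,\u_k^c) = \ell$, while the agreeing blocks contribute $0$; summing over the $j-i$ complemented blocks gives $d_H(\u,\v) = \ell(j-i)$, as claimed.

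I expect the one delicate point to be the cancellation at coordinate $j$: re‑flipping a coordinate whose suffix has already been complemented must restore the original blocks, which is exactly what two applications of the Encoding \ref{gen encod} identities give, namely $f(\u_{j-1}^c,\bar{a}_j) = f(\u_{j-1},\bar{a}_j)^c = f(\u_{j-1},a_j)$. Everything else is routine bookkeeping over the index ranges $k < i$, $i \leq k < j$, $k \geq j$, together with the block distances $d_H(\x,\x^c) = d_H(\y,\y^c) = \ell$.
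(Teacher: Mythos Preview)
Your argument is correct and is essentially the same as the paper's: the paper simply remarks that the proof is analogous to Theorem~\ref{kg theorem comp. spe. case} and rests on the identity $(\x^c)^c=\x$, which is precisely your two successive applications of the single-flip lemma together with the cancellation $\v_k=(\u_k^c)^c=\u_k$ for $k\geq j$. You have just written out the details that the paper leaves implicit.
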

\begin{proof}
The proof is similar to the theorem and follows from the fact that for any DNA string $\x$, $(\x^c)^c= \x$. 
\end{proof}
The following theorem provides a bound on between Hamming distance on binary strings and the encoded DNA strings.
\begin{theorem}
For positive integers $n$, $\ell$ and $\sigma$ $(\leq \ell)$, consider $\x,\y\in\Sigma_{DNA}^\ell$ and 
$\sigma$ = $\min\{d_H(\z_1,\z_2),n-d_H(\z_1,\z_2):\z_1\in\{\x,\x^c\}\mbox{ and }\z_2\in\{\y,\y^c\} \}$.
Let binary strings $\a$, $\b\in\{0,1\}^n$ be encoded into the DNA strings $\u$ = $(\u_1\ \u_2\ldots\u_n)$ and $\v$ = $(\v_1\ \v_2\ldots\v_n)$.
For some $a,b\in\{0,1\}$, if $\u'$ = $(\u\ f(\u_n,a))$ and $\v'$ = $(\v\ f(\v_n,b))$ then 
$d_H(\u',\v')\leq$ 
\[
\left\{
\begin{array}{ll}
    m(d_H(\a,\b)+d_H(a,b)) & \mbox{ if }d_H(\a,\b)\mbox{ is even} \\
    m(d_H(\a,\b)+|1-d_H(a,b)|) &\mbox{ if } d_H(\a,\b)\mbox{ is odd}.
\end{array}
\right.
\]
\label{kg isometry cases}
\end{theorem}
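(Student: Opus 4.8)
The plan is to peel off the final block and reduce everything to a one‑block comparison. Since $\u'$ and $\v'$ are $\u$ and $\v$ with one extra block appended, Hamming distance is additive over the concatenation, so $d_H(\u',\v') = d_H(\u,\v) + d_H\bigl(f(\u_n,a),\,f(\v_n,b)\bigr)$, and I would treat the two summands separately.

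First I would locate the appended block. The identities recorded right after Encoding~\ref{gen encod}, namely $f(\u_{i-1},a_i)^c = f(\u_{i-1}^c,a_i) = f(\u_{i-1},\bar a_i)$, together with $h(0)=h(1)^c$, give by induction on $i$ that $\v_i\in\{\u_i,\u_i^c\}$ for all $i$, with $\v_i=\u_i^c$ exactly when an odd number of the bits $a_1,\dots,a_i$ differ from those of $\b$; this is the same mechanism that drives Theorems~\ref{kg comp special case},~\ref{kg theorem comp. spe. case} and~\ref{kg hamming 2 theorem}. In particular $\v_n=\u_n$ when $d_H(\a,\b)$ is even and $\v_n=\u_n^c$ when $d_H(\a,\b)$ is odd. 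I would then run a four‑way case analysis on the parity of $d_H(\a,\b)$ and on whether $a=b$. If $d_H(\a,\b)$ is even, $f(\v_n,b)=f(\u_n,b)$, which is $f(\u_n,a)$ when $a=b$ and $f(\u_n,a)^c$ when $a\neq b$; since $d_H(\z,\z^c)=\ell$ for every block $\z\in\{\x,\x^c,\y,\y^c\}$, the last block contributes $\ell\,d_H(a,b)$. If $d_H(\a,\b)$ is odd, $f(\v_n,b)=f(\u_n^c,b)=f(\u_n,\bar b)$, which equals $f(\u_n,a)$ iff $a\neq b$, so the last block contributes $\ell\bigl(1-d_H(a,b)\bigr)=\ell\,|1-d_H(a,b)|$. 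This produces precisely the ``$+\,d_H(a,b)$'' and ``$+\,|1-d_H(a,b)|$'' corrections in the two branches, with the per‑block weight playing the role of the outer factor.

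It remains to bound the first summand $d_H(\u,\v)$ by the corresponding main term. For $d_H(\a,\b)=1$ and $d_H(\a,\b)=2$ this is exactly the computation in Theorems~\ref{kg theorem comp. spe. case} and~\ref{kg hamming 2 theorem}; for general binary strings I would rerun the same induction, so that $d_H(\u,\v)$ equals $\ell$ times the number of indices $k$ at which the running parity of disagreements among $a_1,\dots,a_k$ is odd, and then estimate that count in terms of $d_H(\a,\b)$ and the appropriate outer weight. I expect this last estimate to be the real obstacle: the exact value of $d_H(\u,\v)$ depends on \emph{where} $\a$ and $\b$ disagree and not merely on how many places, so forcing it into a bound that depends on $d_H(\a,\b)$ alone is what justifies the even/odd split on $d_H(\a,\b)$ and, I suspect, a pairing of consecutive disagreement positions — the even case pairing them up, the odd case leaving one ``tail'' that is exactly what the $|1-d_H(a,b)|$ term is calibrated to absorb. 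Everything else is routine block algebra powered by $(\z^c)^c=\z$ and $f(\z^c,\cdot)=f(\z,\cdot)^c$.
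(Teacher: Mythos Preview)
Your approach---decompose $d_H(\u',\v')$ as $d_H(\u,\v)$ plus the last block, use the identity $f(\z^c,c)=f(\z,\bar c)$ to track the running parity of disagreements so that $\v_n\in\{\u_n,\u_n^c\}$, and then split into the four cases on the parity of $d_H(\a,\b)$ crossed with $a=b$ versus $a\neq b$---is exactly what the paper does; its proof is simply terser, citing the complement identity together with the fact that $d_H(\x,\y)=t$ implies $d_H(\x^c,\y)\geq\ell-t$, and then directly listing your four-case table with the outer factor written as $\sigma$. The step you flag as the real obstacle (bounding $d_H(\u,\v)$ purely in terms of $d_H(\a,\b)$) is not supplied in the paper's proof either---it asserts the four-case conclusion without that intermediate argument---so you have not missed any idea that the paper actually provides.
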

\begin{proof}
The proof follows from the following two facts. 
(1) For any $\x,\y\in\Sigma_{DNA}^\ell$, if $d_H(\x,\y)$ = $t$ then $d_H(\x^c,\y)\geq \ell-t$; and
(2) For $\z\in\{\x,\x^c,\y,\y^c\}$ and $a,b\in\{0,1\}$, $f(\z,c)^c$ = $f(\z^c,c)$ = $f(\z,\bar{c})$ [Definition \ref{KG Map def}].
So we can derive, $d_H(\u',\v')\leq$  
\[ 
\left\{
\begin{array}{ll}
    \sigma(d_H(\a,\b)+1) & \mbox{ if }d_H(\a,\b)\mbox{ is even and } a\neq b\\
    \sigma d_H(\a,\b) & \mbox{ if }d_H(\a,\b)\mbox{ is even and } a= b\\
    \sigma d_H(\a,\b) &\mbox{ if } d_H(\a,\b)\mbox{ is odd and } a\neq b \\
    \sigma(d_H(\a,\b)+1) &\mbox{ if } d_H(\a,\b)\mbox{ is odd and } a = b.
\end{array}
\right.
\]
Hence the proof follows.
\end{proof}
In order to establish the proposed mapping as an isometry from the set of binary strings to set of DNA strings, where Hamming distance is taken for the set of DNA strings, we introduce a new distance between two binary strings in the following definition.   
\begin{definition}
For a positive integer $n$ and alphabet set $\Sigma$ of size $q$, let $\a$ = $(a_1\ a_2\ldots a_n)\in\Sigma^n$ and $\b$ = $(b_1\ b_2\ldots b_n)\in\Sigma^n$ be two vectors of length $n$. 
For a set $P$ = $\{i:a_i\neq b_i, 1\leq i\leq n\mbox{ and }a_i,b_i\in\Sigma\}$, consider the support set $S$ such that 
\begin{equation*}
    S=
    \left\{
    \begin{array}{ll}
        P & \mbox{ if } d_H(\a,\b)\mbox{ is even and } n>1 \\
        P\cup\{n+1\} & \mbox{ if } d_H(\a,\b)\mbox{ is odd and } n>1,
    \end{array}
    \right.
\end{equation*}
such that, for each $s_j\in S$ $(j=1,2,\ldots,|S|-1)$, $s_j<s_{j+1}$. 
We define a map $d:\Sigma^n\times\Sigma^n\rightarrow\R$ such that 
\begin{equation*}
    d(\a,\b) = 
    \left\{
    \begin{array}{ll}
    \ell\sum_{i=1}^{|S|/2} (s_{2i}-s_{2i-1}) & \mbox{ if } |S|>0,\\
    0 & \mbox{ if } |S|=0, 
    \end{array}
    \right.
    \label{kg distance}
\end{equation*}
where $\ell$ is a positive integer. 

\begin{remark}
The map $d:\Sigma^n\times\Sigma^n\rightarrow\R$ is a distance.
Note that, for a code $\mathscr{C}\subseteq\Sigma^n$, the minimum distance $d$ = $\min\{d(\a,\b):\a\neq\b,\mbox{ and }\a,\b\in\mathscr{C}\}$.

\end{remark}

\end{definition}
For example, consider $n=5$, $\ell=2$ and $\Sigma=\{0,1\}$. For $\a=(1\ 1\ 1\ 1\ 0)$ and $\b=(0\ 1\ 1\ 0\ 0)$, $d(\a,\b)=6$, where $S=\{1,4\}$. 

\begin{theorem}
For positive integers $n$ and $\ell$, 
the Encoding \ref{gen encod} is a distance preserving encoding between $(\{0,1\}^n,d)$ and $(\mathcal{B},d_H)$.
\label{distance preserving theorem}
\end{theorem}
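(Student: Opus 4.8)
The plan is to track, block by block, whether the $m$-th nucleotide block $\v_m$ of the encoding of $\b$ equals the corresponding block $\u_m$ of the encoding of $\a$ or its complement $\u_m^c$. First I would set $P=\{i:a_i\neq b_i,\ 1\le i\le n\}$ and prove by induction on $m$ that $\v_m=\u_m$ when $|\{j\in P:j\le m\}|$ is even and $\v_m=\u_m^c$ when it is odd. The base case $m=1$ comes from $\u_1=h(a_1)$, $\v_1=h(b_1)$ and $h(0)=h(1)^c$, so $\v_1=\u_1$ exactly when $1\notin P$. For the inductive step I would invoke the identity recorded in Encoding~\ref{gen encod}, namely $f(\u_{m-1},a_m)^c=f(\u_{m-1}^c,a_m)=f(\u_{m-1},\bar a_m)$: complementing the previous block and flipping the current input bit each toggle $\v_m$ between $\u_m$ and $\u_m^c$, so the ``complemented-or-not'' status of block $m$ is precisely the running parity of the indicator of $P$ up to $m$. (One could alternatively bootstrap this from Theorems~\ref{kg theorem comp. spe. case} and \ref{kg hamming 2 theorem} by induction on $|P|$, but the direct block-by-block argument is cleaner.)

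Next I would convert this into a distance count. Since $d_H(\x,\x^c)=d_H(\y,\y^c)=\ell$, a block with $\v_m=\u_m$ contributes $0$ to $d_H(\u,\v)$ and a block with $\v_m=\u_m^c$ contributes $\ell$, so
\[
d_H(\u,\v)=\ell\cdot\bigl|\{1\le m\le n:\ |\{j\in P:j\le m\}|\text{ is odd}\}\bigr|.
\]
Writing $P=\{p_1<p_2<\dots<p_k\}$, the indices $m$ with odd prefix count form the disjoint union of the integer intervals $[p_1,p_2-1],[p_3,p_4-1],\dots$, whose last member is $[p_{k-1},p_k-1]$ when $k$ is even and $[p_k,n]$ when $k$ is odd. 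Summing the interval lengths gives $\sum_{i=1}^{k/2}(p_{2i}-p_{2i-1})$ in the even case and $\sum_{i=1}^{(k-1)/2}(p_{2i}-p_{2i-1})+(n+1-p_k)$ in the odd case, and $0$ when $P=\emptyset$. The final step is to match this against the definition of $d$: its support set $S$ equals $P$ when $d_H(\a,\b)=|P|$ is even and $P\cup\{n+1\}$ when it is odd, so in every case $\ell$ times the above count equals $d(\a,\b)$; hence $d_H(\u,\v)=d(\a,\b)$ for all $\a,\b\in\{0,1\}^n$. Injectivity of the encoding (equivalently $d(\a,\b)>0$ whenever $\a\neq\b$) is already noted in Encoding~\ref{gen encod}, so the map is an isometry onto $\mathcal{B}$.

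The main obstacle I foresee is the parity bookkeeping in the last step: one must line up the two cases $|P|$ even and $|P|$ odd with the support-set convention $S=P$ versus $S=P\cup\{n+1\}$, and state the inductive claim with indexing chosen so that the base case $m=1$ is genuinely covered. Everything else is a routine propagation argument resting on the single identity $f(\cdot,a)^c=f(\cdot,\bar a)$ isolated in Encoding~\ref{gen encod} together with $d_H(\x,\x^c)=d_H(\y,\y^c)=\ell$.
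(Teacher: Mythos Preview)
Your proposal is correct and in fact cleaner than the paper's own argument. The paper proceeds by induction on the \emph{string length} $n$: assuming the isometry for length $k$, it appends one more bit to each of $\a,\b$ and splits into four cases according to the parity of $d_H(\a,\b)$ and whether $a_{k+1}=b_{k+1}$, invoking Theorem~\ref{kg isometry cases} to handle each case. You instead isolate the structural fact first --- proving by induction on the block index $m$ that $\v_m\in\{\u_m,\u_m^c\}$ with the choice governed by the running parity of $|\{j\in P:j\le m\}|$ --- and only then translate into distances via the combinatorial observation that the blocks with odd prefix parity fill the intervals $[p_1,p_2{-}1],[p_3,p_4{-}1],\dots$ whose total length matches the formula in Definition~\ref{kg distance}. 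Your route avoids the four-way case split at every inductive step and makes transparent \emph{why} the augmented support set $S=P\cup\{n+1\}$ appears in the odd case (it is exactly the sentinel closing the final open interval $[p_k,n]$). The paper's approach has the minor advantage of reusing Theorem~\ref{kg isometry cases}, but that theorem is stated only as an inequality, so the paper's inductive step implicitly needs the same block-complement identity you use directly. The parity bookkeeping you flag as a possible obstacle is handled correctly in your sketch; just be sure when you write it out to state explicitly that $d_H(\z,\z^c)=\ell$ for every $\z\in\Sigma_{DNA}^\ell$, since the paper never records this as a standalone fact.
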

\begin{proof}
The theorem is proved using induction on the string length $n$. 
The base case, $n=1$, is obvious from Definition \ref{kg distance}. 
For the inductive step, assume that the distance is preserved for $n=k$, where $\a$ = $(a_1\ a_2\ldots a_k)$ and $\b$ = $(b_1\ b_2\ldots b_k)$ with the support set $S$, are encoded into DNA strings $\u$ = $(\u_1\ \u_2\ldots \u_k)$ and $\v$ = $(\v_1\ \v_2\ldots \v_k)$. 
To prove that the distance is preserved for $n=k+1$, consider $\a'$ = $(\a\ a_{k+1})$ = $(a_1\ a_2\ldots a_k\ a_{k+1})$ and $\b'$ = $(\b\ b_{k+1})$ = $(b_1\ b_2\ldots b_k\ b_{k+1})$ with the support set $S'$, where $a_{k+1},b_{k+1}\in\{0,1\}$.
Let the strings $\a$ and $\b$ be encoded into DNA strings $\u'$ = $(\u_1\ \u_2\ldots \u_k\ \u_{k+1})$ and $\v'$ = $(\v_1\ \v_2\ldots \v_k\ \v_{k+1})$, where $\u_{k+1},\v_{k+1}\in\{\x,\y,\x^c,\y^c\}$. 
Now, there are four cases.
(i) Consider $d_H(\a,\b)$ is even and $a_{k+1}$ = $b_{k+1}$. 
Note that both $\u_k$ and $\v_k$ are member of either $\{\x,\x^c\}$ or $\{\y,\y^c\}$. 
On the other hand, $d(\a',\b')$ = $d(\a,\b)$, since $S'$ = $S$.
(ii) If $d_H(\a,\b)$ is even and $a_{k+1}\neq b_{k+1}$ then $\u_{k+1}$ = $\v_{k+1}^c$.
So, for $\ell$ = $d_H(\x,\x^c)$ = $d_H(\y,\y^c)$, $d_H(\u',\v')$ = $d_H(\u,\v)+\ell$ and $d(\a',\b')$ = $d(\a,\b)+\ell$ since, $S'$ = $\{k+1,k+2\}\cup S$.
(iii) If $d_H(\a,\b)$ is odd and $a_{k+1}$ = $b_{k+1}$ then $d_H(\u',\v')$ = $d_H(\u,\v)+m$ and $d(\a',\b')$ = $d(\a,\b)+\ell$ since, $S'$ = $\{k+2\}\cup S\backslash\{k+1\}$.
(iv) If $d_H(\a,\b)$ is odd and $a_{k+1}\neq b_{k+1}$ then $d_H(\u',\v')$ = $d_H(\u,\v)$ and $d(\a',\b')$ = $d(\a,\b)$ since, $S'$ = $S$.
Note that all the four cases follows from the Theorem \ref{kg isometry cases}.
Hence the result follows.
\end{proof}

For positive integers $n$ and $\ell$ $(<n)$, consider a subset $\mathscr{C}\subseteq\{0,1\}^n$. 
Each binary string from the set $\mathscr{C}$ is encoded using the Encoding \ref{gen encod}, and the set of all encoded DNA strings is denoted by $f(\mathscr{C})$. 

\begin{theorem}
For a binary code $\mathscr{C} (n, M, d)$ where $d$ is considered as the distance defined on Definition \ref{kg distance},
a DNA code $f(\mathscr{C})$ with codeword length $n\ell$, size $M$ and minimum Hamming distance $d_H=d$ can be constructed using Encoding \ref{gen encod}. 
\end{theorem}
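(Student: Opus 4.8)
The plan is to derive the statement as an immediate corollary of the distance–preserving property already established, together with the injectivity of the encoding. First I would record the structural facts: by Encoding \ref{gen encod} a binary string of length $n$ is sent to a DNA string in which each of the $n$ coordinates is replaced by a block $\u_i\in\{\x,\x^c,\y,\y^c\}$ of length $\ell$, so every element of $f(\mathscr{C})$ has length exactly $n\ell$. Moreover $f(\mathscr{C})\subseteq f(\{0,1\}^n)=\mathcal{B}$, which is exactly the ambient set appearing in Theorem \ref{distance preserving theorem}; this is what lets me apply that theorem to every pair of codewords of $f(\mathscr{C})$.

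Next I would pin down the size. Encoding \ref{gen encod} is injective — two distinct binary strings are always encoded into distinct DNA strings, as noted there — so the restriction of $f$ to $\mathscr{C}$ is a bijection of $\mathscr{C}$ onto $f(\mathscr{C})$, whence $|f(\mathscr{C})|=|\mathscr{C}|=M$. Then, for the minimum Hamming distance, I would invoke Theorem \ref{distance preserving theorem}: for all $\a,\b\in\{0,1\}^n$ one has $d_H(f(\a),f(\b))=d(\a,\b)$, where $d$ is the distance of Definition \ref{kg distance}. Taking the minimum over all pairs $\a\neq\b$ in $\mathscr{C}$ gives
\[
\min\{d_H(f(\a),f(\b)):\a\neq\b,\ \a,\b\in\mathscr{C}\}=\min\{d(\a,\b):\a\neq\b,\ \a,\b\in\mathscr{C}\}=d .
\]
Combining the three observations shows that $f(\mathscr{C})$ is a DNA code of length $n\ell$, size $M$, and minimum Hamming distance $d_H=d$, as claimed.

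I do not expect a genuine obstacle here, since essentially all the work was done in Theorem \ref{distance preserving theorem}; the statement is best viewed as its code–level restatement. The only points that need a little care are (a) verifying $f(\mathscr{C})\subseteq\mathcal{B}$ before one is entitled to apply the distance–preserving theorem pairwise, which is immediate from the definition of $\mathcal{B}$, and (b) noting that it is injectivity of the encoding, and not merely the preservation of distances, that guarantees the size is exactly $M$ rather than possibly smaller.
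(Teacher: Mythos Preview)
Your proposal is correct and follows essentially the same approach as the paper: the paper's own proof is a one-line appeal to Encoding \ref{gen encod} and Theorem \ref{distance preserving theorem}, and you have simply spelled out the three ingredients (length $n\ell$, injectivity for size $M$, and the isometry for $d_H=d$) that this appeal is implicitly using.
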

\begin{proof}
The proof follows from Encoding \ref{gen encod} and Theorem \ref{distance preserving theorem}.
\end{proof}

\begin{theorem}
For any binary code $\mathscr{C}$ with minimum distance $d\leq n\ell/2$ ($n,\ell\in\Z^+$), there exists a DNA code $f(\mathscr{C}) (n\ell,M,d_{H})$ with complement constraint. 
\label{comp dis}
\end{theorem}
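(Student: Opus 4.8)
The plan is to check the complement constraint for $f(\mathscr{C})$ directly, using the structural facts about the encoding that are already available. By the preceding theorem, $f(\mathscr{C})$ is an $(n\ell,M,d_H)$ DNA code with $d_H=d$, so it suffices to show $d_H(\u,\v^c)\ge d$ for every pair $\u,\v\in f(\mathscr{C})$ with $\u\neq\v^c$. Write $\u=f(\a)$, $\v=f(\b)$ with $\a,\b\in\mathscr{C}$. The first step is to realise $\v^c$ as an encoded string: by Theorem \ref{kg comp special case}, $\v^c=f(\hat{\b})$, where $\hat{\b}=(\bar{b}_1\ b_2\ \ldots\ b_n)$ is $\b$ with only its first bit complemented. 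Then Theorem \ref{distance preserving theorem} transfers the problem to the binary side, $d_H(\u,\v^c)=d_H(f(\a),f(\hat{\b}))=d(\a,\hat{\b})$, with $d$ the distance of Definition \ref{kg distance}.

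The core of the argument is a complementation identity for that distance, which I would record as a short lemma:
\[
d(\a,\b)+d(\a,\hat{\b})=n\ell\qquad\mbox{for all }\a,\b\in\{0,1\}^n .
\]
Its proof unwinds Definition \ref{kg distance}: passing from $\b$ to $\hat{\b}$ inserts $1$ into, or removes $1$ from, the difference set $P=\{i:a_i\neq b_i\}$ and at the same time flips the parity of $d_H(\a,\b)$, hence toggles whether the endpoint $n+1$ lies in the support set $S$. Checking the four resulting cases, one sees that the ordered support sets attached to $(\a,\b)$ and to $(\a,\hat{\b})$ split $\{1,2,\ldots,n+1\}$ into two interleaved families of consecutive runs, so their interval-length sums telescope to $(n+1)-1=n$; multiplying by $\ell$ gives the identity.

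Substituting back yields $d_H(\u,\v^c)=n\ell-d_H(\u,\v)$. Here $\u=\v^c$ is precisely the case $d(\a,\hat{\b})=0$, i.e.\ $d_H(\u,\v)=d(\a,\b)=n\ell$; for every other pair the blockwise picture (where $\u$ and $\v$ agree in at least one block and each mismatched block contributes $\ell=d_H(\x,\x^c)=d_H(\y,\y^c)$) already gives $d_H(\u,\v)\le(n-1)\ell$. It remains to sharpen this to $d_H(\u,\v)\le n\ell-d$, for then $d_H(\u,\v^c)=n\ell-d_H(\u,\v)\ge d$, which is the complement constraint. I expect this last estimate to be the main obstacle: it is where the hypothesis $d\le n\ell/2$ must be invoked, and concretely one must exclude any inter-codeword distance of $f(\mathscr{C})$ from the open range $(n\ell-d,\,n\ell)$ --- equivalently, no $\a,\b\in\mathscr{C}$ with $n\ell-d<d(\a,\b)<n\ell$. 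When the relevant distances all equal $d$ (in particular when $d=\ell$, where $(n-1)\ell=n\ell-d$) the crude bound above finishes the proof at once; in general one has to combine $d\le n\ell/2$ with a finer look at the support sets entering the complementation identity. Everything else is bookkeeping with Theorems \ref{kg comp special case} and \ref{distance preserving theorem} and Encoding \ref{gen encod}.
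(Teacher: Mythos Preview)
Your route through Theorem \ref{kg comp special case}, Theorem \ref{distance preserving theorem} and a bespoke complementation identity for the distance $d$ is considerably more involved than the paper's: the paper simply invokes the elementary complement property $d_H(\u,\v^c)\geq n\ell-d_H(\u,\v)$ for DNA strings, then asserts $d_H(\u,\v)=d(\a,\b)\leq n\ell/2$ and concludes $d_H(\u,\v^c)\geq n\ell/2\geq d$. Both arguments amount to the same identity $d_H(\u,\v^c)=n\ell-d_H(\u,\v)$ for encoded strings; the block structure of $\mathcal{B}$ already gives it in one line, since at every block position $\u_i,\v_i\in\{\z,\z^c\}$ for a common $\z$, so $d_H(\u_i,\v_i)+d_H(\u_i,\v_i^c)=\ell$.

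You correctly isolate the crux: one must bound $d_H(\u,\v)\leq n\ell-d$ for every codeword pair with $\u\neq\v^c$. The paper glosses over exactly this point --- its assertion $d(\a,\b)\leq n\ell/2$ does not follow from the hypothesis that the \emph{minimum} distance is at most $n\ell/2$. However, your plan to close the gap ``with a finer look at the support sets'' cannot succeed, because the bound fails in general. Take $n=4$, $\ell=1$ and $\mathscr{C}=\{0000,\,0010,\,1001\}$: one checks from Definition \ref{kg distance} that $d(0000,0010)=2$ and $d(0000,1001)=d(0010,1001)=3$, so the minimum distance is $d=2=n\ell/2$ and the hypothesis holds; yet for $\u=f(0000)$ and $\v=f(1001)$ one gets $d_H(\u,\v^c)=n\ell-d_H(\u,\v)=4-3=1<2=d_H$. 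The complement constraint is violated, so the ``main obstacle'' you flag is a genuine obstruction rather than a technicality that extra bookkeeping will remove.
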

\begin{proof}
Let binary strings $\a$ and $\b$ of length $n$ are encoded into DNA strings $\u$ and $\v$ of length $n\ell$.
Form the property of the complement of a DNA string, $d_H(\u,\v^c)\geq n\ell-d_H(\u,\v)$.
From Theorem \ref{distance preserving theorem}, $d_H(\u,\v)$ = $d(\a,\b)\leq n\ell/2$.
Therefore, $d_H(\u,\v^c)\geq n\ell/2$ and hence the proof follows.
\end{proof}

\begin{theorem}
For a positive integer $n$, if a binary linear code with codeword length $n$ contains $(1\ 0\ 0\ldots 0)$ as a codeword then the encoded DNA code (using Encoding \ref{gen encod}) will satisfy complement constraint. 
\end{theorem}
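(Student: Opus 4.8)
The plan is to derive the statement as a short consequence of Theorem~\ref{kg comp special case} together with the distance preserving property of Theorem~\ref{distance preserving theorem}. Write $\mathscr{C}\subseteq\{0,1\}^n$ for the binary linear code, a subspace of $\F_2^n$ containing $\mathbf{e}=(1\ 0\ 0\ldots 0)$, and let $f(\mathscr{C})$ be the DNA code of length $n\ell$ obtained from it via Encoding~\ref{gen encod}. By Theorem~\ref{distance preserving theorem}, $f$ is injective and maps the distance $d$ of Definition~\ref{kg distance} to Hamming distance, so $f(\mathscr{C})$ is a DNA code whose minimum Hamming distance $d_H$ equals the minimum distance of $\mathscr{C}$.

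First I would use linearity: for every $\b=(b_1\ b_2\ldots b_n)\in\mathscr{C}$ the vector $\b+\mathbf{e}=(\bar b_1\ b_2\ldots b_n)$ again lies in $\mathscr{C}$, since $\mathscr{C}$ is closed under addition over $\F_2$ and $\mathbf{e}\in\mathscr{C}$; hence $f(\b+\mathbf{e})\in f(\mathscr{C})$. Next I would apply Theorem~\ref{kg comp special case} with $i=1$: because $\b+\mathbf{e}$ differs from $\b$ exactly in the first coordinate, its encoding equals $f(\b)^c$. The only bookkeeping involved is that the initial block rule $h(0)=h(1)^c$ of Encoding~\ref{gen encod} turns a first-bit flip into a global complement, which is precisely what that theorem records. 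Combining these, $f(\b)^c=f(\b+\mathbf{e})\in f(\mathscr{C})$, i.e.\ $f(\mathscr{C})$ is closed under taking complements of its codewords.

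Finally, take any two codewords $\u=f(\a)$ and $\v=f(\b)$ of $f(\mathscr{C})$ with $\u\neq\v^c$ (the case $\u=\v^c$ being excluded by the same convention used for the reverse and reverse-complement constraints). Since $\v^c=f(\b+\mathbf{e})$ is itself a codeword of $f(\mathscr{C})$ and that code has minimum Hamming distance $d_H$, we obtain $d_H(\u,\v^c)\geq d_H$, which is exactly the complement constraint. I expect no genuine obstacle here: all the substance sits in Theorem~\ref{kg comp special case}, and the argument mirrors the ``closed under reverse/reverse-complement'' reasoning already used in Section~\ref{sec:preliminary}. Note that, unlike Theorem~\ref{comp dis}, this route does not require the hypothesis $d\le n\ell/2$; it trades it for the assumptions of linearity and $\mathbf{e}\in\mathscr{C}$, and by Remark~\ref{r rc constraint remark} it can be combined with a reverse-closure argument to also yield the reverse-complement constraint.
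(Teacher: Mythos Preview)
Your proposal is correct and follows essentially the same approach as the paper: use linearity together with $\mathbf{e}\in\mathscr{C}$ to see that flipping the first bit keeps you in the code, invoke Theorem~\ref{kg comp special case} to identify that flip with DNA complementation, and conclude via the minimum distance property that the complement constraint holds. Your write-up is somewhat more explicit (spelling out the role of Theorem~\ref{distance preserving theorem} and the $\u\neq\v^c$ caveat), but the argument is the same.
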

\begin{proof}
Consider a binary linear code containing the codeword $(1\ 0\ 0\ldots 0)$ of length $n$.
For any codeword $\a$ = $(a_1\ a_2\ldots a_n)$ of the binary linear code, $(1\ 0\ 0\ldots 0) + (a_1\ a_2\ldots a_n)$ = $(\bar{a}_1\ a_2\ldots a_n)$ = $\b$ is also a codeword of the code. 
Therefore, from Theorem \ref{kg comp special case}, for each binary codeword $\a$, there exists a binary codeword $\b$ such that the encoded DNA strings from $\a$ and $\b$ will be complement to each other. 
Hence, by the distance property, the theorem is proved.
\end{proof}

\begin{theorem}
For positive integers $n$ and $\ell$, let $\x,\y\in\Sigma_{DNA}^\ell$.
Than, for any $\u,\v\in f(\{0,1\}^n)$, $d_H(\u,\v^r)\geq$ $\min\{d_H(\x,\x^r),d_H(\y,\y^r),d_H(\x,\x^{rc}),d_H(\y,\y^{rc})\}$ for odd $n$, and $d_H(\u,\v^r)\geq$ $n\min\{d_H(\x,\y^r),d_H(\x,\y^{rc})\}$ for even $n$. 
\label{kg reverse condition}
\end{theorem}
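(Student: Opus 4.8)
The plan is to exploit the block structure of the encoded strings. By Encoding \ref{gen encod} every $\u\in f(\{0,1\}^n)$ is a concatenation of $n$ length-$\ell$ blocks, $\u=(\u_1\ \u_2\ \ldots\ \u_n)$, and since $f(\cdot,\cdot)$ sends $\{\x,\x^c\}$ into $\{\y,\y^c\}$ and $\{\y,\y^c\}$ into $\{\x,\x^c\}$ (read off Table \ref{Gen Map}($a$)), consecutive blocks alternate between the two families. Which residue class mod $2$ carries the $\{\x,\x^c\}$ blocks is fixed by the initializing map $h$ and is the same for every codeword, so I may assume $\u_j\in\{\x,\x^c\}$ for odd $j$ and $\u_j\in\{\y,\y^c\}$ for even $j$ (the opposite choice merely swaps the roles of $\x$ and $\y$, under which both claimed bounds are invariant). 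Reversing the whole DNA string reverses the order of the blocks and reverses each block internally, so $\v^r=(\v_n^r\ \v_{n-1}^r\ \ldots\ \v_1^r)$; in particular the $j$-th block of $\v^r$ is $\v_{n-j+1}^r$, whence
\[
d_H(\u,\v^r)=\sum_{j=1}^{n} d_H\!\left(\u_j,\v_{n-j+1}^r\right).
\]

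First I would dispose of odd $n$. Then $j$ and $n-j+1$ have the same parity, so for odd $j$ both $\u_j$ and $\v_{n-j+1}$ lie in $\{\x,\x^c\}$ (hence $\v_{n-j+1}^r\in\{\x^r,\x^{rc}\}$), and for even $j$ both lie in $\{\y,\y^c\}$ (hence $\v_{n-j+1}^r\in\{\y^r,\y^{rc}\}$). Using that Hamming distance is unchanged when both arguments are complemented or when both are reversed, each of the four possible block comparisons collapses onto one of $d_H(\x,\x^r),d_H(\x,\x^{rc})$ (for odd $j$) or $d_H(\y,\y^r),d_H(\y,\y^{rc})$ (for even $j$); for instance $d_H(\x^c,\x^r)=d_H(\x,\x^{rc})$ and $d_H(\x^c,\x^{rc})=d_H(\x,\x^r)$. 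Thus every summand is at least $\min\{d_H(\x,\x^r),d_H(\y,\y^r),d_H(\x,\x^{rc}),d_H(\y,\y^{rc})\}$, and summing over the $n\geq 1$ blocks gives the stated lower bound (in fact even $n$ times it).

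Next I would handle even $n$, where $j$ and $n-j+1$ have opposite parity, so each comparison pits an $\{\x,\x^c\}$-block against the reverse of a $\{\y,\y^c\}$-block, or vice versa. The same two invariances reduce every such comparison to one of $d_H(\x,\y^r)$ and $d_H(\x,\y^{rc})$: directly for the $\x$-against-$\y^r$ blocks, and for the $\y$-against-$\x^r$ blocks after first reversing and then complementing both arguments, using $d_H(\y,\x^r)=d_H(\y^r,\x)=d_H(\x,\y^r)$ and $d_H(\y,\x^{rc})=d_H(\x,\y^{rc})$. Hence every summand is at least $\min\{d_H(\x,\y^r),d_H(\x,\y^{rc})\}$, and summing over all $n$ blocks yields the required $n\min\{d_H(\x,\y^r),d_H(\x,\y^{rc})\}$.

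The only real effort is the parity bookkeeping and keeping track of which of the identities $d_H(\z_1^c,\z_2^c)=d_H(\z_1,\z_2)$ and $d_H(\z_1^r,\z_2^r)=d_H(\z_1,\z_2)$ collapses each combination $\z_1\in\{\x,\x^c\}$ (or $\{\y,\y^c\}$) against a reversed block onto the listed quantity; there is no genuinely delicate step, and nothing beyond Encoding \ref{gen encod} and the definitions of reverse and complement is needed.
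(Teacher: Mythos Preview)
Your proposal is correct and follows essentially the same route as the paper: the block decomposition $d_H(\u,\v^r)=\sum_{j=1}^n d_H(\u_j,\v_{n-j+1}^r)$, the parity split on $n$ determining whether $\u_j$ and $\v_{n-j+1}$ lie in the same or opposite families $\{\x,\x^c\}$, $\{\y,\y^c\}$, and then bounding each summand by the appropriate minimum. You are a bit more explicit than the paper about which Hamming-distance invariances (under simultaneous complement or simultaneous reverse) collapse each block comparison onto the listed quantities, and you correctly observe that the odd-$n$ case in fact yields $n$ times the stated minimum.
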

\begin{proof}
For $\x,\y\in\Sigma_{DNA}^\ell$, let the binary strings $\a,\b\in \{0,1\}^n$ of length $n$ be encoded into DNA strings $\u=(\u_1\ \u_2\ldots \u_n)$, $\v = (\v_1\ \v_2\ldots\v_n)$ in $\{\x,\x^c,\y,\y^c\}$, 
where $\u_{2i},\v_{2i}\in\{\u_2,\u_2^c\}$ and $\u_{2i+1},\v_{2i+1}\in\{\u_1,\u_1^c\}$ for $1\leq 2i,2i+1\leq n$. 
The set $f(\{0,1\}^n)$ is the collection of all possible DNA strings such that obtained DNA blocks will be from $\{\u_2,\u_2^c\}$ and $\{\u_1,\u_1^c\}$ at even positions and odd positions respectively.
Consider $d_H(\u,\v^r)=\sum_{j=1}^nd_H(\u_j,\v^r_{n-j+1})$. 
Now two cases may arise.

Case 1: If $n$ is odd then $j$ and $n-j+1$ both are either even or odd. 
If both $j$ and $n-j+1$ are even then $\u_j,\v_{n-j+1}\in\{\u_2,\u_2^c\}$, and if both $j$ and $n-j+1$ are odd then $\u_j,\v_{n-j+1}\in\{\u_1,\u_1^c\}$. 
Therefore, $\u,\v^r\in f(\{0,1\}^n)$ and, from Encoding \ref{gen encod}, $d_H(\u_j,\v^r_{n-j+1})\geq\min\{d_H(\x,\x^r),d_H(\y,\y^r),d_H(\x,\x^{rc}),d_H(\y,\y^{rc})\}$. 

Case 2: If $n$ is even then the parity $j$ and $n-j+1$ will be different. 
So, for even $j$, $\u_j\in\{\u_2,\u_2^c\}$ and $\v_{n-j+1}\in\{\u_1,\u_1^c\}$, and, for odd $j$, $\u_j\in\{\u_1,\u_1^c\}$ and $\v_{n-j+1}\in\{\u_2,\u_2^c\}$. 
Therefore, from Encoding \ref{gen encod} and the fact that, for any $\z_1,\z_2\in\Sigma_{DNA}^\ell$, $d_H(\z_1,\z_2^r)$ = $d_H(\z_1^r,\z_2)$, we obtain $d_H(\u_j,\v_{n-j+1}^r)\geq\min\{d_H(\x,\y^r),d_H(\x,\y^{rc})\}$.
Hence the result follows for every $n$.
\end{proof}

\begin{theorem}
For an even positive integer $n$ and a positive integer $\ell$, consider $\x,\y\in\Sigma_{DNA}^\ell$ such that $d_H(\x,\y^{rc})$ = $d_H(\x,\y^r)=\ell$. 
Then, the DNA codes constructed using Encoding \ref{gen encod} will satisfy the reverse constraint. 
\label{kg reverse theorem}
\end{theorem}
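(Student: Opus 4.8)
The plan is to read this off Theorem~\ref{kg reverse condition}, specialized to even $n$. That theorem already shows that for even $n$ and any two encoded strings $\u,\v\in f(\{0,1\}^n)$ one has $d_H(\u,\v^r)\geq n\min\{d_H(\x,\y^r),d_H(\x,\y^{rc})\}$. So the first step is simply to substitute the hypothesis $d_H(\x,\y^{rc})=d_H(\x,\y^r)=\ell$ into this bound: the minimum on the right-hand side is $\ell$, and therefore $d_H(\u,\v^r)\geq n\ell$ for every pair $\u,\v$ in $f(\{0,1\}^n)$, hence for every pair of codewords of $f(\mathscr{C})$ for any binary code $\mathscr{C}\subseteq\{0,1\}^n$.

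The second step is to note that $n\ell$ is precisely the common length of the encoded DNA words (each of the $n$ blocks $\u_i$ contributes $\ell$ symbols), so in fact $d_H(\u,\v^r)=n\ell$, the maximum value the Hamming distance between two words of that length can take. In particular $\u\neq\v^r$, and trivially $d_H(\u,\v^r)=n\ell\geq d_H$ for whatever minimum Hamming distance $d_H\ (\leq n\ell)$ the constructed code $f(\mathscr{C})$ possesses. By the definition of the reverse constraint recalled in Section~\ref{sec:preliminary}, this says exactly that $f(\mathscr{C})$ satisfies the reverse constraint, which is what is to be proved.

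There is essentially no obstacle here, since the work has already been carried out in Theorem~\ref{kg reverse condition} (itself resting on Theorem~\ref{kg isometry cases}); the present statement is the clean corollary obtained by pushing the two relevant block-level distances $d_H(\x,\y^r)$ and $d_H(\x,\y^{rc})$ up to their maximal value $\ell$. The only point that warrants an explicit sentence is verifying that, for even $n$, Theorem~\ref{kg reverse condition} indeed involves only the cross distances $d_H(\x,\y^r)$ and $d_H(\x,\y^{rc})$ and none of the self-reverse distances $d_H(\x,\x^r),d_H(\y,\y^r)$ — this is because, when $n$ is even, the indices $j$ and $n-j+1$ always have opposite parity, so the reversed block $\v^r_{n-j+1}$ lies in the ``other'' pair $\{\u_1,\u_1^c\}$ or $\{\u_2,\u_2^c\}$ than $\u_j$. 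Hence the bound $n\ell$ applies without any further case analysis, and the proof is complete.
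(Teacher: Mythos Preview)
Your proposal is correct and follows essentially the same route as the paper's own proof: invoke Theorem~\ref{kg reverse condition} for even $n$, substitute the hypothesis $d_H(\x,\y^r)=d_H(\x,\y^{rc})=\ell$ to obtain $d_H(\u,\v^r)\geq n\ell$, and then observe that since $n\ell$ is the codeword length we have $d_H\leq n\ell\leq d_H(\u,\v^r)$, yielding the reverse constraint. Your additional paragraph explaining why only the cross distances appear for even $n$ is a helpful elaboration but not a departure from the paper's argument; the parenthetical remark that Theorem~\ref{kg reverse condition} rests on Theorem~\ref{kg isometry cases} is not quite accurate (its proof is self-contained), but this does not affect the validity of your argument.
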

\begin{proof}
If $d_H(\x,\y^{rc})$ = $d_H(\x,\y^r)=\ell$ then, from Theorem \ref{kg reverse condition}, $d_H(\u,\v^r)\geq n\min\{d_H(\x,\y^r),d_H(\x,\y^{rc})\}=n\ell$. 
But the length of the encoded DNA string is $n\ell$ so, $d_H\leq n\ell$ and therefore, $d_H(\u,\v^r)\geq d_H$ for any DNA code constructed using Encoding \ref{gen encod}.
\end{proof}

\begin{lemma}
For positive integers $n$ and $\ell$, 
if the binary strings $\a$ and $\b$ of length $n$ are encoded into DNA strings $\u$ and $\v$ using Encoding \ref{gen encod} then 
\[
n-\lfloor d_H(\a,\b)/2\rfloor\geq \frac{1}{\ell}d_H(\u,\v)\geq\lceil d_H(\a,\b)/2\rceil.
\]
\label{kg half dis lemma}
\end{lemma}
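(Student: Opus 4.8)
The plan is to reduce the statement to a purely combinatorial estimate on the support set $S$ from Definition \ref{kg distance}. First I would invoke Theorem \ref{distance preserving theorem}: since Encoding \ref{gen encod} is distance preserving between $(\{0,1\}^n,d)$ and $(\mathcal{B},d_H)$, we have $d_H(\u,\v)=d(\a,\b)$, so
\[
\frac{1}{\ell}d_H(\u,\v)=\sum_{i=1}^{|S|/2}(s_{2i}-s_{2i-1}),
\]
where $S=\{s_1<s_2<\cdots<s_{|S|}\}$ is the support set of $\a$ and $\b$. Writing $k=d_H(\a,\b)$, the definition of $S$ gives: if $k$ is even then $S\subseteq\{1,\ldots,n\}$ with $|S|=k$, and if $k$ is odd then $S$ contains $n+1$ as its largest element with $|S|=k+1$; in either case $|S|=2m$ with $m=\lceil k/2\rceil$. (If $k=0$ the sum is empty and the claim reads $n\geq 0\geq 0$, so I may assume $k\geq1$.)

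For the lower bound, the elements of $S$ are strictly increasing integers, hence $s_{2i}-s_{2i-1}\geq 1$ for every $i$; summing over the $m$ pairs yields $\sum_{i=1}^{m}(s_{2i}-s_{2i-1})\geq m=\lceil d_H(\a,\b)/2\rceil$, which after multiplying by $\ell$ is the right-hand inequality.

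For the upper bound I would telescope:
\[
\sum_{i=1}^{m}(s_{2i}-s_{2i-1})=(s_{2m}-s_1)-\sum_{i=1}^{m-1}(s_{2i+1}-s_{2i})\leq (s_{2m}-s_1)-(m-1),
\]
again because consecutive support elements differ by at least $1$. Since $s_1\geq 1$, and $s_{2m}\leq n$ when $k$ is even while $s_{2m}=n+1$ when $k$ is odd, substituting and simplifying each parity case gives $\sum_{i=1}^{m}(s_{2i}-s_{2i-1})\leq n-\lfloor k/2\rfloor$; multiplying by $\ell$ gives the left-hand inequality. The only real care needed is the bookkeeping of the two parities of $d_H(\a,\b)$ (and the degenerate $k=0$ case); beyond that the argument is a single telescoping estimate, so I do not anticipate a genuine obstacle.
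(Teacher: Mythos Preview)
Your proposal is correct and follows essentially the same approach as the paper: both invoke the isometry of Theorem~\ref{distance preserving theorem} to replace $\frac{1}{\ell}d_H(\u,\v)$ by $\sum_{i}(s_{2i}-s_{2i-1})$ and then bound this sum combinatorially using only that the elements of $S$ are strictly increasing integers in $\{1,\ldots,n+1\}$. Your telescoping estimate with explicit parity bookkeeping for the even/odd cases of $d_H(\a,\b)$ is slightly more detailed than the paper's terse statement of the inequality $n-|S|/2\geq\sum\geq|S|/2$, but the underlying argument is the same.
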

\begin{proof}
For a positive integer $n$, if $S\subseteq\{1,2,\ldots,n,n+1\}$ is a set with even cardinality such that, for each $s_j\in S$ $(j=1,2,\ldots,|S|-1)$, $s_j<s_{j+1}$ then one can observe that 
\[
n-\frac{|S|}{2}\geq\sum_{i=1}^{|S|/2} (s_{2i}-s_{2i-1})\geq\frac{|S|}{2}.
\]
From the Definition \ref{kg distance}, $d_H(\a,\b)\in\{|S|,|S|-1\}$, and therefore, the proof follows.
\end{proof}

\begin{theorem}
For positive integers $n$ and $\ell$, consider a binary code with minimum hamming distance $d_H$ and minimum distance $d$.
Than $n-\lfloor d_H/2\rfloor\geq \frac{d}{\ell}\geq\lceil d_H/2\rceil$.
\label{kg half dis}
\end{theorem}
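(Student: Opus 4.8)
The plan is to obtain both inequalities as direct corollaries of Lemma~\ref{kg half dis lemma}, which for a single pair of binary strings $\a,\b$ of length $n$ encoded into $\u,\v$ gives $n-\lfloor d_H(\a,\b)/2\rfloor\geq \tfrac{1}{\ell}d_H(\u,\v)\geq\lceil d_H(\a,\b)/2\rceil$; by the isometry of Theorem~\ref{distance preserving theorem} the middle term equals $\tfrac{1}{\ell}d(\a,\b)$, so the chain becomes $n-\lfloor d_H(\a,\b)/2\rfloor\geq \tfrac{1}{\ell}d(\a,\b)\geq\lceil d_H(\a,\b)/2\rceil$ for every pair $\a\neq\b$ in the code. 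The only subtlety is that the pair of codewords realising the code's minimum Hamming distance $d_H$ need not be the pair realising its minimum distance $d$, so each of the two desired bounds will be produced from a different pair, together with the monotonicity of $x\mapsto\lceil x/2\rceil$ and of $x\mapsto n-\lfloor x/2\rfloor$.

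For the lower bound $\tfrac{d}{\ell}\geq\lceil d_H/2\rceil$, I would take codewords $\a,\b\in\mathscr{C}$ with $d(\a,\b)=d$. Applying the lemma, $\tfrac{d}{\ell}=\tfrac{1}{\ell}d(\a,\b)\geq\lceil d_H(\a,\b)/2\rceil$, and since $d_H(\a,\b)\geq d_H$ and $x\mapsto\lceil x/2\rceil$ is non-decreasing, $\lceil d_H(\a,\b)/2\rceil\geq\lceil d_H/2\rceil$, giving the claim. For the upper bound $n-\lfloor d_H/2\rfloor\geq\tfrac{d}{\ell}$, I would instead take codewords $\a',\b'\in\mathscr{C}$ with $d_H(\a',\b')=d_H$; the lemma then gives $n-\lfloor d_H/2\rfloor=n-\lfloor d_H(\a',\b')/2\rfloor\geq\tfrac{1}{\ell}d(\a',\b')\geq\tfrac{d}{\ell}$, where the last inequality is just the definition of the minimum distance $d$ of $\mathscr{C}$. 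Concatenating the two displays yields $n-\lfloor d_H/2\rfloor\geq\tfrac{d}{\ell}\geq\lceil d_H/2\rceil$.

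I do not expect any genuine obstacle here: all the real work has already been done in Lemma~\ref{kg half dis lemma} and, through it, in the distance-preservation of Theorem~\ref{distance preserving theorem}. The only points to get right when writing it up are (i) not conflating the two extremal pairs $(\a,\b)$ and $(\a',\b')$, and (ii) invoking the correct direction of monotonicity for the ceiling and floor expressions; the degenerate case of a code with a single codeword can be dismissed, since then the minimum distances are vacuous.
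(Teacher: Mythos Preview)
Your proposal is correct and follows essentially the same route as the paper, namely reducing to Lemma~\ref{kg half dis lemma} together with the isometry of Theorem~\ref{distance preserving theorem}. Your write-up is in fact more careful than the paper's terse argument: you correctly isolate that the pair realising $d$ need not be the pair realising $d_H$, and handle the two inequalities with two separate extremal pairs, whereas the paper's proof does not make this distinction explicit.
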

\begin{proof}
For any code $\mathscr{C}$ with minimum Hamming distance $d_H$, if $\a,\b\in\mathscr{C}$ then $\lceil d_H(\a,\b)/2\rceil\geq\lceil d_H/2\rceil$ and $n-\lfloor d_H/2\rfloor \geq n-\lfloor d_H(\a,\b)/2\rfloor\geq\lceil d_H/2\rceil$. 
The proof follows from Lemma \ref{kg half dis lemma} and Theorem \ref{distance preserving theorem}.
\end{proof}
In the following theorem, a constraint on binary string is imposed in such a way that the encoded DNA string will be complete conflict free.
\begin{theorem}
For positive integers $n$, $\ell$ and any positive even integer $2\mu\in\{1,2,\ldots,\lfloor n/2\rfloor\}$, consider a binary code with codeword length $n$, such that for each codeword $(a_1\ a_2\ldots a_n)$, $2\mu<\sum_{i=\lambda+1}^{\lambda+2\mu} (a_ia_{2\mu+i}+\bar{a}_i\bar{a}_{2\mu+i})$, where $\lambda=1,2,\ldots, n-2\mu$.
Then there exists a complete conflict free DNA code with codeword length $n\ell$.
\end{theorem}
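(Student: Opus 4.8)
The plan is to reduce the statement to Theorem~\ref{conflict free dna string} applied codeword by codeword, so the only real work is setting up the right map $f$ and then collecting the images into a code. First I would fix a pair $\x,\y\in\Sigma_{DNA}^\ell$ for which the four length-$2\ell$ strings $(\x\ \y)$, $(\x\ \y^c)$, $(\y\ \x)$ and $(\y\ \x^c)$ are all $\ell$ conflict free, and I would use the associated map $f$ of Definition~\ref{KG Map def} together with Encoding~\ref{gen encod}. Such a pair exists for every $\ell$; for $\ell=1$ one may take $\x=A$, $\y=C$, and for $\ell=2$ the pair $\x=CG$, $\y=AT$ of Table~\ref{Gen Map}(b) already works, and in general one can pick $\x,\y$ with distinct leading symbols and no internal immediate repetition so that none of the four concatenations contains a consecutively repeated block of length $\le\ell$. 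Verifying a concrete choice is a finite check for each $\ell$, and this is the one mildly fiddly point of the argument.

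Second, with $f$ fixed I would encode every codeword $\a=(a_1\ a_2\ldots a_n)$ of the given binary code $\mathscr{C}$ via Encoding~\ref{gen encod}, producing a DNA string $f(\a)$ of length $n\ell$. The hypothesis on $\mathscr{C}$ — that $2\mu<\sum_{i=\lambda+1}^{\lambda+2\mu}(a_ia_{2\mu+i}+\bar a_i\bar a_{2\mu+i})$ for every even $2\mu\in\{1,\ldots,\lfloor n/2\rfloor\}$ and every admissible shift $\lambda$ — is precisely (up to also including the $\lambda=0$ term, which one may append without loss) the per-string condition appearing in the hypothesis of Theorem~\ref{conflict free dna string}. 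Hence, with the $\x,\y$ chosen above, Theorem~\ref{conflict free dna string} applies to each $\a\in\mathscr{C}$ and guarantees that $f(\a)$ is a complete conflict free DNA string, i.e.\ $\lfloor n\ell/2\rfloor$ conflict free.

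Third, I would assemble these images into the code $f(\mathscr{C})$. Since Encoding~\ref{gen encod} sends distinct binary strings to distinct DNA strings, $f(\mathscr{C})$ is a set of $|\mathscr{C}|$ DNA strings, all of length $n\ell$, and every one of its codewords is complete conflict free; by Definition~\ref{con. free code} this makes $f(\mathscr{C})$ a complete conflict free DNA code of length $n\ell$, which is what is asserted. As a bonus, if the metric parameters are wanted, Theorem~\ref{distance preserving theorem} shows that the minimum Hamming distance of $f(\mathscr{C})$ equals the minimum $d$-distance of $\mathscr{C}$, so $f(\mathscr{C})$ is in fact an $(n\ell,|\mathscr{C}|,d)$ complete conflict free DNA code.

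The main obstacle I anticipate is not the logical skeleton — that is a direct appeal to Theorem~\ref{conflict free dna string} together with the injectivity of the encoding noted in Encoding~\ref{gen encod} — but rather pinning down, uniformly in $\ell$, an explicit admissible pair $(\x,\y)$ so that all four concatenations $(\x\ \y)$, $(\x\ \y^c)$, $(\y\ \x)$, $(\y\ \x^c)$ are $\ell$ conflict free; once such a pair is in hand, the rest is bookkeeping.
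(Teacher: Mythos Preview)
Your approach is essentially the paper's own: the paper's proof is the single line ``The proof follows from Definition~\ref{con. free code} and Theorem~\ref{conflict free dna string},'' i.e.\ apply Theorem~\ref{conflict free dna string} to each codeword and then invoke the definition of a complete conflict free code, exactly as you propose. Your worry about exhibiting an admissible pair $(\x,\y)$ for every $\ell$ is legitimate, but the paper does not address it either---it simply assumes such a pair is available (and supplies explicit ones for small $\ell$ in Table~\ref{hairpin l table}); so your proof is at least as complete as the paper's, and your additional remarks on injectivity and on the Hamming distance via Theorem~\ref{distance preserving theorem} are extras the paper omits here.
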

\begin{proof}
The proof follows from Definition \ref{con. free code} and Theorem \ref{conflict free dna string}. 
\end{proof}

\section{Conflict Free DNA codes}
\label{sec:family}
For each positive integer $\ell$ = $2,3\ldots7$, all different possibilities of ($\x$, $\y$) are computed in Table \ref{hairpin l table}.
Considering any pair ($\x,\y$) from Table \ref{hairpin l table}, the obtained DNA code will satisfy the Hamming, reverse, reverse-complement and $\lfloor n\ell/2\rfloor$-$GC$ content constraints.
Moreover, each DNA codeword obtained from the encoding will be $2\ell-1$ conflict free. 

For $\x$ = $ACT$ and $\y$ = $CTG$, the DNA code obtained from $[7,4,3]$ binary Hamming code is given in Table \ref{Hanning codewords table}. 

For various binary codes, the parameters of encoded DNA codes are listed in Table \ref{Code comparison table}. 
\begin{table}[t]
\caption{Parameters for conflict free DNA codes encoded from binary codes.}
\begin{center}
\begin{tabular}{|c|m{0.6cm}|m{0.6cm}|m{0.6cm}|}
\hline
Binary Code & \multicolumn{3}{c|}{DNA Code Parameters} \\ \cline{2-4}
(Encoded from) & $n\ell$ & $M$ & $d_H$  \\ \hline \hline
\multicolumn{1}{|l|}{[5,2,5] Repetition Code} & $5\ell$ & $2$ & $3\ell$  \\  \hline
\multicolumn{1}{|l|}{[7,4,3] Hamming Code} & $7\ell$ & $16$ & $2\ell$  \\  \hline
\multicolumn{1}{|l|}{[8,4,2] Reed Muller Codes} & $8\ell$ & $256$ & $\ell$   \\ \hline
\multicolumn{1}{|l|}{(15, 256, 5) Nordstrom-Robinson Code} & $15\ell$ & $256$ & $3\ell$  \\ \hline
\multicolumn{1}{|l|}{[23,12,7] Golay Code} & $23\ell$ & $4096$ & $4\ell$  \\ \hline
\end{tabular}
\end{center}
\label{Code comparison table}
\end{table}
\subsection{Reed-Muller code:} The binary Reed Muller codes are introduced by Reed and Muller in 1954 \cite{6499441}. 
For two non-negative positive integers $m$ and $r$ ($r\leq m$), the $r^{th}$ order binary Reed Muller code $\mathcal{R}(r,m)$ is a linear code of length $2^m$, code size $2^{\sum_{i=1}^r\binom{m}{i}}$ and minimum hamming distance $d_H$ = $2^{m-r}$.
The generator matrix of $\mathcal{R}(r,m)$ is
\[
G_{r,m} = 
\left(
\begin{array}{cc}
    G_{r,m-1} &  G_{r,m-1} \\
    \textbf{0} & G_{r-1,m-1}
\end{array} 
\right), \mbox{ for }1\leq r\leq m-1,
\]
where 
\[
G_{m,m} = 
\left(
\begin{array}{c}
    G_{m-1,m}  \\
    1\ 1\ldots 1\ 0
\end{array}
\right),
\]
$G_{0,m}$ is the of size $1\times 2^m$ with all entries equal to $1$, and $\textbf{0}$ is a zero matrix with $2^{\sum_{i=1}^r\binom{m}{i}}$ - $2^{\sum_{i=1}^{r-1}\binom{m-1}{i}}$ rows and $2^{m-1}$ columns.

In the following theorem, the minimum distance (Definition \ref{kg distance}) is obtained for Reed Muller codes.
In addition, the encoded DNA code, obtained from binary Reed Muller code satisfies various constraints. 

\begin{theorem}
For positive integers $m$, $r$ ($0\leq r\leq m$) and $\ell$, consider $\x,\y\in\Sigma_{DNA}^\ell$ such that 
\begin{itemize}
    \item for $\x^*\in\{\x,\x^c\}$ and $\y^*\in\{\y,\y^c\}$, all $(\x\ \y^*\ \x^*\ \y^*)$ and $(\y\ \x^*\ \y^*\ \x^*)$ DNA strings are $2\ell-1$ conflict free,
    \item each of $(\x\ \y\ \x)$, $(\x\ \y\ \x^c)$, $(\x\ \y^c\ \x)$ and $(\x\ \y^c\ \x^c)$ is free from reverse-complement sub-string(s),
    \item $d_H(\x,\y^{rc})$ = $d_H(\x,\y^r)$ = $\ell$, and
    \item the sum of the $GC$ content of $\x$ and $\y$ will be $\ell$.
\end{itemize} 
For the binary Reed-Muller code $\mathcal{R}(r,m)$, there exists a DNA code $\mathscr{C}_{DNA}(\ell2^m,2^{\sum_{i=1}^r\binom{m}{r}},\ell2^{m-r-1})$ satisfying Hamming, reverse and reverse-complement, where each DNA codeword is $\ell$ conflict free with $GC$ content $(\ell2^{m-1})$ and also free from hairpin like structure.  
\label{reed muller theorem}
\end{theorem}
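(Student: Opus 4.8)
The plan is to take $\mathscr{C}_{DNA}$ to be the image $f(\mathcal{R}(r,m))$ of the binary Reed--Muller code under Encoding \ref{gen encod} with the prescribed pair $\x,\y$, and then to deduce each advertised property from a result already established earlier in the paper. Write $n=2^m$, so that $\mathcal{R}(r,m)$ has length $n$, size $|\mathcal{R}(r,m)|=2^{\sum_{i=1}^r\binom{m}{i}}$, and minimum Hamming distance $2^{m-r}$. Since Encoding \ref{gen encod} sends distinct binary strings to distinct DNA strings, $|f(\mathcal{R}(r,m))|=|\mathcal{R}(r,m)|$ and every encoded codeword has length $n\ell=\ell 2^m$; this already fixes the length and the size in the claimed triple $\mathscr{C}_{DNA}(\ell 2^m,2^{\sum_{i=1}^r\binom{m}{i}},\ell 2^{m-r-1})$.

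For the minimum Hamming distance I would invoke Theorem \ref{distance preserving theorem}: $f$ carries the distance $d$ of Definition \ref{kg distance} on $\{0,1\}^n$ to Hamming distance on $f(\{0,1\}^n)$, so the minimum Hamming distance of $f(\mathcal{R}(r,m))$ equals the minimum $d$-distance of $\mathcal{R}(r,m)$. Substituting the binary minimum distance $2^{m-r}$ into Theorem \ref{kg half dis} bounds that minimum $d$-distance below by $\ell\lceil 2^{m-r}/2\rceil=\ell 2^{m-r-1}$ (for $r\le m-1$), which is exactly the claimed distance parameter, since in the notation $\mathscr{C}(n,M,d)$ the value $d$ is only a lower bound on the pairwise distances; this is the Hamming constraint. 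The $GC$-content claim is equally quick: $n=2^m$ is even, so by Lemma \ref{kg gen gc cont}, supplemented by Remark \ref{kg gen gc cont remark} for the codewords whose first block lies in $\{\y,\y^c\}$, every encoded codeword has $GC$ content $(g_\x+g_\y)n/2$, and the fourth hypothesis $g_\x+g_\y=\ell$ turns this into $\ell 2^{m-1}=\lfloor n\ell/2\rfloor$, the balanced $GC$-content.

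The remaining three constraints are then matched to their source lemmas. The first bullet is verbatim the hypothesis of Theorem \ref{conflict free theorem}, which makes every encoded codeword $2\ell-1$ conflict free, hence (by the downward closure of conflict-freeness, as $\ell\le 2\ell-1$) $\ell$ conflict free. The second bullet is verbatim the hypothesis of Theorem \ref{hair pin theorem}, which makes every encoded codeword free from hairpin-like secondary structure (Definition \ref{kg hairpin definition}). The third bullet $d_H(\x,\y^{rc})=d_H(\x,\y^r)=\ell$ is the hypothesis of Theorem \ref{kg reverse theorem}, so with $n$ even we get $d_H(\u,\v^r)\ge n\ell\ge d_H$ for all codewords, i.e.\ the reverse constraint. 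For the reverse-complement constraint I would write $\v^{rc}=(\v^c)^r$ and use Theorem \ref{kg comp special case} to see that $\v^c$ is itself an element of $f(\{0,1\}^n)$; then the even-$n$ estimate of Theorem \ref{kg reverse condition}, applied to the pair $(\u,\v^c)$ in $f(\{0,1\}^n)$, gives $d_H(\u,\v^{rc})=d_H(\u,(\v^c)^r)\ge n\ell\ge d_H$. (Equivalently, one may obtain the complement constraint from Theorem \ref{comp dis}, using that the minimum $d$-distance $\ell 2^{m-r-1}$ does not exceed $n\ell/2$, and then combine reverse and complement through Remark \ref{r rc constraint remark}.)

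Thus the proof is essentially a bookkeeping exercise, with the real content already placed in Theorems \ref{conflict free theorem}, \ref{hair pin theorem}, \ref{kg reverse theorem}, \ref{distance preserving theorem}, \ref{kg half dis} and Lemma \ref{kg gen gc cont}. The genuinely delicate step is the reverse-complement constraint: no earlier statement yields it directly for an encoded \emph{sub}code, so one is forced to lift the question to the full encoded set $f(\{0,1\}^n)$, which (unlike $f(\mathcal{R}(r,m))$ itself) is closed under complement via Theorem \ref{kg comp special case}, before applying the reverse estimate. A smaller point is the arithmetic $\lceil 2^{m-r}/2\rceil=2^{m-r-1}$, which requires $r\le m-1$; the degenerate case $r=m$ (where $\mathcal{R}(m,m)$ is the whole space and the stated distance $\ell/2$ is only a lower bound) and the trivial case $m=0$ should be excluded or flagged. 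Everything else — injectivity of the encoding, evenness of $n=2^m$, and the conflict-free and hairpin hypotheses matching the earlier lemmas verbatim — is routine.
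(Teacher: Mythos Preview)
Your proof is correct and follows essentially the same route as the paper: both take $\mathscr{C}_{DNA}=f(\mathcal{R}(r,m))$ and then read off each claimed property from the matching earlier result (Theorems \ref{conflict free theorem}, \ref{hair pin theorem}, \ref{kg reverse theorem}, \ref{distance preserving theorem}, \ref{kg half dis}, Lemma \ref{kg gen gc cont}, and --- for reverse-complement --- Theorem \ref{comp dis} combined with Remark \ref{r rc constraint remark}, which is exactly your parenthetical alternative). The only substantive difference is that the paper pins down the minimum $d$-distance of $\mathcal{R}(r,m)$ \emph{exactly} as $\ell 2^{m-r-1}$, exhibiting the pair of Reed--Muller codewords $\mathbf{0}_{2^m}$ and $(\mathbf{0}_{2^m-2^{m-r}}\ \mathbf{1}_{2^{m-r}})$ to match the lower bound from Theorem \ref{kg half dis}, whereas you stop at the lower bound and argue (correctly, given the paper's definition of an $(n,M,d)$ code) that this already suffices for the stated parameters.
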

\begin{proof}
For positive integers $m$ $(>1)$ and $\ell$ $(\leq 2^{m-1})$, consider a binary Reed-Muller code $\mathcal{R}(r,m)$ and the corresponding encoded DNA code $f(\mathcal{R}(r,m))$ for some pair $(\x,\y)\in (\Sigma_{DNA}^\ell)^2$. 
From Theorem \ref{distance preserving theorem}, the codeword length and code size for  $f(\mathcal{R}(r,m))$ will be $\ell2^m$ and $2^{\sum_{i=1}^r\binom{m}{r}}$. 
From Theorem \ref{kg half dis}, the minimum distance $d\geq \ell\lceil d_H/2\rceil$ = $\ell2^{m-r-1}$.
For a positive integer $t$, we denote $\textbf{0}_t$ = $(0\ 0\ldots 0)$ and $\textbf{1}_t$ = $(1\ 1\ldots 1)$, each of length $t$.
Then the binary strings $\textbf{0}_{2^m}$ and $(\textbf{0}_{2^m}\ \textbf{1}_{2^{m-r}})$ will be in $\mathcal{R}(r,m)$.
Therefore $d\leq d(\textbf{0}_{2^m},(\textbf{0}_{2^m}\ \textbf{1}_{2^{m-r}}))$ = $\ell2^{m-r-1}$. 
Hence, $d$ = $\ell2^{m-r-1}$ for the binary $\mathcal{R}(r,m)$.
So, from Theorem \ref{distance preserving theorem}, the minimum Hamming distance for $f(\mathcal{R}(r,m))$ will be $d_H$ = $\ell2^{m-r-1}$.
Now, from the definition of DNA code, the proof follows for Hamming constraint. 
Since the DNA codeword length $\ell2^m$ is even for each pair $m$ and $\ell$, and $d_H(\x,\y^{rc})$ = $d_H(\x,\y^r)$ = $\ell$, therefore, from Theorem \ref{kg reverse theorem} the DNA code $f(\mathcal{R}(r,m))$ satisfies reverse constraint.  
From Theorem \ref{comp dis}, the DNA code $f(\mathcal{R}(r,m))$ follows complement constraint. 
Therefore, from Remark \ref{r rc constraint remark}, the DNA code meets reverse-complement constraint.
From Lemma \ref{kg gen gc cont}, the $GC$ content of each encoded DNA codeword is $(\ell2^{m-1})$.
From Theorem \ref{conflict free theorem}, each encoded DNA codeword is $2\ell-1$ conflict free.
From Theorem \ref{hair pin theorem}, all DNA codewords are free from hairpin like structures.  
\end{proof}
\begin{table}[t]
\caption{List of all encoded DNA strings for $(\x,\y)$ = $(ACT,CTG)$ from $[7,4,3]$ binary Hamming code.}
\begin{center}
\begin{tabular}{|c|c|}
\hline
   $[7,4,3]$ Hamming Code   &  Encoded DNA Code \\  \hline
   $0000000$  & $ATACGCATACGCATACGCATA$      \\  \hline
   $1110000$  & $TATCGCTATGCGTATGCGTAT$  \\  \hline
   $1001100$  & $TATGCGTATCGCTATGCGTAT$  \\  \hline
   $0111100$  & $ATAGCGATAGCGATACGCATA$  \\  \hline
   $0101010$  & $ATAGCGTATCGCATAGCGTAT$  \\  \hline
   $1011010$  & $TATGCGATAGCGTATCGCATA$  \\  \hline
   $1100110$  & $TATCGCATACGCTATCGCATA$  \\  \hline
   $0010110$  & $ATACGCTATGCGATAGCGTAT$  \\  \hline
   $1101001$  & $TATCGCATAGCGTATGCGATA$  \\  \hline
   $0011001$  & $ATACGCTATCGCATACGCTAT$  \\  \hline
   $0100101$  & $ATAGCGTATGCGATACGCTAT$  \\  \hline
   $1010101$  & $TATGCGATACGCTATGCGATA$  \\  \hline
   $1000011$  & $TATGCGTATGCGTATCGCTAT$  \\  \hline
   $0110011$  & $ATAGCGATACGCATAGCGATA$  \\  \hline
   $0001111$  & $ATACGCATAGCGATAGCGATA$  \\  \hline
   $1111111$  & $TATCGCTATCGCTATCGCTAT$  \\  \hline
\end{tabular}
\end{center}
\label{Hanning codewords table}
\end{table}
\section{Conclusion}
\label{sec:conclusion}
In this article, two different approaches (computational and algebraic) have been used to construct DNA codes. Computational approach improves the lower bounds on the size of the DNA codes in many cases from previous study under a new constraint (generalization of homo-polymer constraint) apart from the popular constraint. In algebraic approach, an isometry between binary space and DNA codes is proposed and used to construct many new classes of DNA codes. The new codes are also free from hair-pin like secondary structures.

It would be an interesting future task to find bounds on DNA codes with the new constraint in mind and constructing optimal codes meeting those bounds. Extending the isometry from binary to $q$-ary case will also be an interesting future task.  

\bibliographystyle{IEEEtran}
\bibliography{IEEE_TCOM}
\begin{table*}[t]
\caption{Pairs $(\x,\y)\in\Sigma^\ell$ such that (i) $d_H(\x,\y)$ = $d_H(\x,\y^{rc})$ = $d_H(\x,\y^r)$ = $\ell$, (ii) $GC$-content sum of $\x$ and $\y$ is $\ell$, and (iii) each DNA string in the set $\{(\x\ \y^*\ \x^*\ \y^*), (\y\ \x^*\ \y^*\ \x^*): \x^*\in\{\x,\x^c\} \mbox{ and } \y^*\in\{\y,\y^c\}\}$ is $2\ell-1$ conflict free. 
}
\begin{center}\scriptsize
\begin{tabular}{|c|c|l|}
\hline
$\ell$ & $\#(\x,\y)$ & \hspace{6.5cm}$(\x,\y)$\\ \hline
$3$ & $8$ & $(ATA,CGC)$, $(ATA,GCG)$, $(CGC,ATA)$, $(CGC,TAT)$, $(GCG,ATA)$, $(GCG,TAT)$, $(TAT,CGC)$,  \\ 
    &      & $(TAT,GCG)$ \\ \hline
$4$ & $32$ & $(ATCA,CGAC)$, $(GTCA,CGAT)$, $(ATGA,GCAG)$, $(CTGA,GCAT)$, $(ACTA,CAGC)$, $(GCTA,CAGT)$, \\  
    &      & $(AGTA,GACG)$, $(CGTA,GACT)$,$(CGAC,ATCA)$, $(TGAC,ATCG)$, $(CAGC,ACTA)$, $(TAGC,ACTG)$,  \\  
    &      & $(ATGC,TCAG)$, $(CTGC,TCAT)$, $(AGTC,TACG)$, $(CGTC,TACT)$, $(GCAG,ATGA)$, $(TCAG,ATGC)$, \\  
    &      & $(GACG,AGTA)$, $(TACG,AGTC)$, $(ATCG,TGAC)$, $(GTCG,TGAT)$, $(ACTG,TAGC)$, $(GCTG,TAGT)$, \\ 
    &      & $(GCAT,CTGA)$, $(TCAT,CTGC)$, $(CGAT,GTCA)$, $(TGAT,GTCG)$, $(GACT,CGTA)$, $(TACT,CGTC)$, \\  
    &      & $(CAGT,GCTA)$, $(TAGT,GCTG)$ \\ \hline
 $5$ & $112$ & $(ACGCA,CTATC)$, $(ACGCA,GATAG)$, $(CTGCA,GACAT)$, $(GATCA,CGAGT)$, $(GCTCA,CTGAT)$, \\
    &      & $(GCTCA,CTAGT)$, $(AGCGA,CATAC)$, $(AGCGA,GTATG)$, $(GTCGA,CAGAT)$, $(CATGA,GCACT)$, \\
    &      & $(CGTGA,GTCAT)$, $(CGTGA,GTACT)$,$(ATCTA,GCGAC)$, $(ATCTA,CGAGC)$, $(ATCTA,CGTGC)$, \\  
    &      & $(ATCTA,GCACG)$, $(ATCTA,CAGCG)$, $(ATCTA,GCTCG)$, $(GTCTA,CGACT)$, $(GTCTA,CAGCT)$, \\  
    &      & $(ATGTA,CGAGC)$, $(ATGTA,GACGC)$, $(ATGTA,CGTGC)$, $(ATGTA,CGCAG)$, $(ATGTA,GCACG)$, \\  
    &      & $(ATGTA,GCTCG)$, $(CTGTA,GCAGT)$, $(CTGTA,GACGT)$, $(TAGAC,AGTCG)$, $(TAGAC,AGCTG)$,  \\  
    &      & $(TCGAC,ATCTG)$, $(CATAC,AGCGA)$, $(CATAC,TCGCT)$, $(AGTAC,TCACG)$, $(CGAGC,TATCA)$,  \\  
    &      & $(CGAGC,ATCTA)$, $(CGAGC,ATGTA)$, $(CGAGC,TACAT)$, $(CGAGC,TAGAT)$, $(CGAGC,ACTAT)$, \\  
    &      & $(TGAGC,ATCAG)$, $(TGAGC,ACTAG)$, $(AGTGC,TCATG)$, $(AGTGC,TACTG)$, $(CGTGC,TCATA)$, \\  
    &      & $(CGTGC,ATCTA)$, $(CGTGC,ATGTA)$, $(CGTGC,TACAT)$, $(CGTGC,TAGAT)$, $(CGTGC,ATACT)$, \\  
    &      & $(TGATC,ACTCG)$, $(CTATC,ACGCA)$, $(CTATC,TGCGT)$, $(ACGTC,TACAG)$, $(ATGTC,TGCAG)$, \\  
    &      & $(ATGTC,TGACG)$, $(TACAG,ACTGC)$, $(TACAG,ACGTC)$, $(TGCAG,ATGTC)$, $(GATAG,ACGCA)$, \\  
    &      & $(GATAG,TGCGT)$, $(ACTAG,TGAGC)$, $(GCACG,TATGA)$, $(GCACG,ATCTA)$, $(GCACG,ATGTA)$, \\  
    &      & $(GCACG,TACAT)$, $(GCACG,TAGAT)$, $(GCACG,AGTAT)$, $(TCACG,ATGAC)$, $(TCACG,AGTAC)$, \\  
    &      & $(ACTCG,TGATC)$, $(ACTCG,TAGTC)$, $(GCTCG,TGATA)$, $(GCTCG,ATCTA)$, $(GCTCG,ATGTA)$, \\  
    &      & $(GCTCG,TACAT)$, $(GCTCG,TAGAT)$, $(GCTCG,ATAGT)$, $(TCATG,AGTGC)$, $(GTATG,AGCGA)$, \\  
    &      & $(GTATG,TCGCT)$, $(AGCTG,TAGAC)$, $(ATCTG,TCGAC)$, $(ATCTG,TCAGC)$, $(GACAT,CTGCA)$, \\  
    &      & $(GACAT,CGTCA)$, $(TACAT,CGAGC)$, $(TACAT,CGTGC)$, $(TACAT,GCGTC)$, $(TACAT,GCACG)$, \\  
    &      & $(TACAT,CTGCG)$, $(TACAT,GCTCG)$, $(CAGAT,GTCGA)$, $(CAGAT,GCTGA)$, $(TAGAT,CGAGC)$, \\  
    &      & $(TAGAT,GTCGC)$, $(TAGAT,CGTGC)$, $(TAGAT,GCACG)$, $(TAGAT,GCTCG)$, $(TAGAT,CGCTG)$, \\  
    &      & $(GCACT,CATGA)$, $(GCACT,CAGTA)$, $(GTACT,CGTGA)$, $(CAGCT,GTCTA)$, $(TCGCT,CATAC)$, \\  
    &      & $(TCGCT,GTATG)$, $(CGAGT,GATCA)$, $(CGAGT,GACTA)$, $(CTAGT,GCTCA)$, $(GACGT,CTGTA)$, \\  
    &      & $(TGCGT,CTATC)$, $(TGCGT,GATAG)$ \\ \hline
\end{tabular}
\end{center}
\label{hairpin l table}
\end{table*}

\end{document}